\newtheorem{theorem}{Theorem}
\newtheorem{proposition}{Proposition}
\newtheorem{lemma}{Lemma}
\newtheorem{corollary}{Corollary}
\theoremstyle{definition}
\newtheorem{definition}[theorem]{Definition}
\begin{document}

\title{Random Natural Gradient}

\author{Ioannis Kolotouros}
\affiliation{University of Edinburgh, School of Informatics, EH8 9AB Edinburgh, United Kingdom}
\email{i.kolotouros@sms.ed.ac.uk}
\author{Petros Wallden}
\affiliation{University of Edinburgh, School of Informatics, EH8 9AB Edinburgh, United Kingdom}
\email{petros.wallden@ed.ac.uk}
\maketitle

\begin{abstract}
Hybrid quantum-classical algorithms appear to be the most promising approach for near-term quantum applications. An important bottleneck is the classical optimization loop, where the multiple local minima and the emergence of barren plateaux make these approaches less appealing. To improve the optimization, the Quantum Natural Gradient (QNG) method [Quantum 4, 269 (2020)] was introduced -- a method that uses information about the local geometry of the quantum state-space. While the QNG-based optimization is promising, in each step it requires more quantum resources, since to compute the QNG one requires $O(m^2)$ quantum state preparations, where $m$ is the number of parameters in the parameterized circuit. In this work we propose two methods that reduce the resources/state preparations required for QNG, while keeping the advantages and performance of the QNG-based optimization. Specifically, we first introduce the Random Natural Gradient (RNG) that uses random measurements and the classical Fisher information matrix (as opposed to the quantum Fisher information used in QNG). The essential quantum resources reduce to linear $O(m)$ and thus offer a quadratic ``speed-up’’, while in our numerical simulations it matches QNG in terms of accuracy. We give some theoretical arguments for RNG and then benchmark the method with the QNG on both classical and quantum problems. Secondly, inspired by stochastic-coordinate methods, we propose a novel approximation to the QNG which we call Stochastic-Coordinate Quantum Natural Gradient that optimizes only a small (randomly sampled) fraction of the total parameters at each iteration. This method also performs well in our benchmarks, while it uses fewer resources than the QNG.
\end{abstract}

\section{Introduction}

We are approaching the era where quantum computers of $\approx 1000$ qubits will become widely available. Despite the increase in scale, these quantum devices still inherit imperfect operations and short coherence times making them unsuitable for certain quantum algorithms. To address these issues, people employed classical computers to work in conjunction with these imperfect devices and developed variational quantum algorithms (VQAs) \cite{cerezo2021variational, RevModPhys.94.015004}. However, the question of whether these hybrid quantum-classical architectures will allow for a practical advantage is still open.

In these approaches, the computational task of interest is transformed into the ground state of an interacting qubit Hamiltonian. The user then selects a properly suited parameterized quantum circuit, either problem-specific such as QAOA \cite{farhi2014quantumapproximateoptimizationalgorithm} or problem agnostic suited for the available quantum hardware \cite{kandala2017hardware}, and iteratively prepares and measures quantum states. Additionally, the classical computer post-processes the quantum measurements to calculate an objective function (or its higher-order derivatives) in order to update the parameters of the quantum circuit towards a descent direction. When the optimization terminates, the quantum computer returns a quantum state which is a solution (or an approximation) for the computational task considered. 

In order to make such a framework practical, certain conditions must be met. In the NISQ regime, we need to distinguish the classical from the quantum resources required to solve a problem. For the number of parameters used and their corresponding matrices (e.g. the Hessian) of size $m\times m$, the classical computers are powerful enough to perform standard linear algebra calculations with perfect accuracy while the quantum computers are still imperfect, with slow compilation times. For this reason, the number of quantum resources needed to solve a problem must be limited enough so that we still exploit the quantum effects, and the classical resources should be maximized as they offer speed and reliability.

In the past few years, a lot of interest has been focused on all subsequent parts of VQAs. People have utilized both gradient-based methods that exploit parameter-shift rules \cite{PhysRevA.103.012405, Schuld_2019, wierichs2022general} or gradient-free methods \cite{anand2021natural, zhao2020natural} that treat the quantum circuit as a black box. On the other hand, \cite{koczor2022quantum} proposed that one should construct a quadratic model of the loss landscape by performing measurements on the quantum computer and then minimize this quadratic approximation on the classical computer, reducing the overall quantum resources. Additionally, \cite{abbas2023quantum,bowles2023backpropagation} argued on whether we can construct algorithms 
that will allow VQAs to be trained as efficiently as classical neural networks by reducing the quantum overhead required for the gradient calculation. In this paper, we focus on the classical optimization part and especially on the information that the classical computer receives in order to update the parameters of the quantum circuit.

Most of the classical optimization algorithms (e.g. Gradient Descent, SPSA, COBYLA) treat the quantum circuit as a black box which outputs expectation values (or their first and high-order derivatives) without considering information about the underlying quantum state. Because of the non-convexity of the energy landscapes, the majority of these algorithms converge to sub-optimal local minima, which are numerous even in low-depth quantum circuits \cite{you2021exponentially, anschuetz2021critical}. Moreover, these algorithms can be rather costly, requiring constant communication between the quantum and classical resources until convergence. To address these issues, people have developed \emph{information-theoretic} methods such as the quantum natural gradient \cite{stokes2020quantum}, where at each step, the updates are performed based on local information of the state space. These methods are closely connected to imaginary time evolution \cite{mcardle2019variational, gacon2023variational}. 

In \cite{stokes2020quantum}, the authors generalized the idea of the \emph{natural gradient} \cite{amari1998natural} in the quantum setting and introduced a novel classical optimization algorithm that takes into consideration how small changes in the parameter space affect the generated quantum states. However, such updates require the calculation of the quantum Fisher information matrix (QFIM) \cite{meyer2021fisher} at each step, which in general is computationally expensive to calculate, and thus, using it in VQAs becomes impractical. The QFIM has been extensively used in the NISQ era either as a capacity measure \cite{haug2021capacity}, a generalization measure for quantum machine learning models \cite{haug2023generalization}, or even as a tool to construct naturally parameterized architectures \cite{haug2022natural}. The QNG has also been further extended in the case of noisy and nonunitary circuits \cite{koczor2022qngnonunitary}.

In this paper, we give two new optimization methods, improving on efficiency over the QNG optimizer. We first introduce a method that we call \emph{random natural gradient}. As we will show, preparing a quantum state and measuring it on a random basis (by applying a random unitary and measuring it on the computational basis) offers a significant speedup in a VQA optimization framework. Random measurements have previously been used to construct the classical shadows of quantum states \cite{huang2020predicting, sack2022avoiding, elben2023randomized}, to extend the size of a quantum computation beyond the physical qubits of a device \cite{lowe2023fast} or even to experimentally approximate the quantum Fisher information \cite{yu2021experimental}.

Then, inspired by classical coordinate-descent methods \cite{wright2015coordinate, nesterov2012efficiency, tseng2001convergence}, we propose an approximation to the quantum natural gradient that requires only a fraction of the total resources at every iteration that we call stochastic-coordinate quantum natural gradient.\\

\noindent\emph{Our Contributions:}
\begin{itemize}
    \item We introduce a novel optimization technique which we call \emph{random natural gradient} that is quadratically faster than the \emph{quantum natural gradient} and achieves significant speedup over classical optimization algorithms used. 

    \item We analyze how different measurements on parameterized quantum states can approximate the underlying geometry in the state space.

    \item We introduce conditions under which preconditioning the gradient of the loss with an information matrix will result in a descent direction.

    \item We introduce a novel approximation to the quantum natural gradient called \emph{stochastic-coordinate quantum natural gradient} that utilizes only a portion of the total parameters of the parameterized quantum circuit and benchmark it against the quantum natural gradient.

\end{itemize}

\noindent\emph{Structure:} In Sec. \ref{sec:preliminaries} we give the essential background on Variational Quantum Algorithms, on distance measures in the state-space and probability distribution spaces, and on the quantum natural gradient.  In Sec. \ref{sec:random_natural_gradient} we introduce a novel optimization algorithm called \emph{random natural gradient} that achieves a quadratic speed-up over the quantum natural gradient and performs optimally in practice, where we postpone explanations of why this method works for the following two sections. In Sec. \ref{sec:local_optimization} we discuss the concept of local optimization, and provide conditions under which updates will result in a descent direction. In Sec. \ref{sec:optimal_measuremnt} we explain how we can find measurement operators under which we gain the maximum information of the state space. In Sec. \ref{sec:stochastic_coordinate} we introduce a second optimization algorithm called \emph{stochastic-coordinate quantum natural gradient} that approximates the quantum natural gradient by utilizing only a small part of the total parameters of the quantum circuit. In Sec. \ref{sec:method_evaluation} we discuss the different problems used to benchmark our proposals 
in our experiments and the different metrics used to evaluate our algorithms. In \ref{sec:results} we benchmark our proposed algorithms with other classical optimization algorithms. We conclude in Sec. \ref{sec:discussion} with a general discussion of our results and future work.

\section{Preliminaries}
\label{sec:preliminaries}

\subsection{Variational Quantum Algorithms}
\label{sec:variational_quantum_algorithms}

Variational Quantum Algorithms (VQAs) refer to a class of hybrid quantum/classical algorithms where a quantum computer works in parallel with a classical computer employed with a classical optimization algorithm. This framework offers a practical framework in the NISQ setting but lacks generic theoretical guarantees about its performance.

Consider a mathematical problem that is mapped into a qubit-Hamiltonian consisting of $L = \mathcal{O}(\text{poly}(n))$ Pauli strings, where $n$ is the system size (i.e. the number of qubits). This Hamiltonian is chosen in a way so that its ground state corresponds to the solution of the initial problem. The most general way to write this Hamiltonian is:
\begin{equation}
    \mathcal{H} = \sum_{i=1}^L c_l P_l
\end{equation}
where $c_l \in \mathbb{R}$ is the real coefficient corresponding to Pauli string $P_l$.

As a first step in VQAs, the quantum computer utilizes a parameterized architecture $U(\boldsymbol{\theta})$, consisting of $m=\mathcal{O}(\text{poly}(n))$ parameterized gates along with an easy-to-prepare reference state (usually chosen to be the $\ket{0} \equiv \ket{0}^{\otimes n}$ state). Consider the most general unitary operator $U(\boldsymbol{\theta})$ parameterized by an $m$-dimensional vector $\boldsymbol{\theta} = (\theta_1, \ldots, \theta_m)$:
\begin{equation}
    U(\boldsymbol{\theta}) = \prod_{j=m}^1 e^{-i\theta_j g_j}
\label{eq:parameterized_quantum_circuit}
\end{equation}
where $g_j$ are the generators ($g_j^\dagger = g_j$) corresponding to each parameterized gate. Then, the quantum computer prepares a parameterized quantum state $\rho(\boldsymbol{\theta})$ and measures it (possibly in many different bases).

The classical computer is then used to post-process these measurements in order to compute the objective function. The objective function is usually the expectation value of the energy, but other choices have also been considered in the literature such as CVaR \cite{barkoutsos2020improving}, Ascending-CVaR \cite{kolotouros2022evolving} or Gibbs objective functions \cite{li2020quantum}. Moreover, the classical computer employs a classical optimization algorithm and calculates the direction to tune the parameters of the quantum architecture which points towards the direction of a (possibly local \cite{you2021exponentially, anschuetz2022quantum}) minimum.

The previous two steps are iteratively executed until the classical optimization has converged or other stopping criteria have been met. Finally, the VQA algorithm outputs both the optimal solution of the problem as well as the quantum state corresponding to the solution (or an approximation to them)
\begin{equation*}
    \rho(\boldsymbol{\theta^*}) , \;\; \mathcal{L}(\boldsymbol{\theta^*}).
\end{equation*}

\subsection{Distance of Probability Distributions}
\label{sec:distance_of_prob_distr}

As we discussed in Section \ref{sec:variational_quantum_algorithms} the quantum computer prepares a parameterized quantum state $\ket{\psi(\boldsymbol{\theta})} = U(\boldsymbol{\theta})\ket{0}$. Once the state has been prepared, a measurement basis is chosen, and the system of qubits is measured. The measurement basis can be changed by first applying a unitary matrix and then performing projective measurements on each qubit. 

Let $V$ be the unitary operator that changes the measurement basis of the system of qubits. We can assume that this unitary is parameterized by a $k = \mathcal{O}(\text{poly}(n))$-dimensional vector (i.e., it is comprised of a series of parameterized gates) $\boldsymbol{\phi} = (\phi_1, \phi_2, \ldots, \phi_k)$. Measurements on a basis that would require a super-polynomial unitary to be made to a local measurement are not practical (i.e. cannot be efficiently performed), and we will ignore them in our analysis. Overall, we can write the action of the ansatz family and the change of basis as:
\begin{equation}
    \ket{\psi(\boldsymbol{\theta}, \boldsymbol{\phi})} = V(\boldsymbol{\phi})U(\boldsymbol{\theta})\ket{0}.
\end{equation}

Once the measurement basis is selected, the system of qubits is prepared and measured a constant number of times with respect to a measurement basis $\mathcal{M} = \{ \Pi_l \}$. As a result, a different choice of basis $\mathcal{M}$ gives rise to a different probability distribution $p_{\mathcal{M}}(\boldsymbol{\theta}) = \{p_i(\boldsymbol{\theta})\}$ with $p_{\mathcal{M}}(\boldsymbol{\theta}) \succcurlyeq 0$, $\norm{p_{\mathcal{M}}(\boldsymbol{\theta})}_1 = 1$. The number of different probability outcomes for a $n$-qubit state is upper bounded by $2^n$ and depends on the measurement basis $\mathcal{M}$. However, if a quantum state is measured only $K$ times (with $K\ll 2^n$) then this number is bounded by $K$. For now, we will assume that the number of different measurement outcomes is $K$ and so $p_{\mathcal{M}}(\boldsymbol{\theta}) \in \Delta^{K-1}$, where $\Delta^{K-1}$ is the probability simplex of dimension $K-1$. The probability $p_l$ of each outcome is given as:
\begin{equation}
    p_l = \tr(V(\boldsymbol{\phi}) \rho(\boldsymbol{\theta}) V^\dagger(\boldsymbol{\phi})  \Pi_l)
\end{equation}
where $\Pi_l = \ketbra{l}$ is the projection operator on the $l$-th eigenspace. 

It will be very useful to introduce a measure that quantifies distances in the space of probability distributions. Let $\boldsymbol{u}, \boldsymbol{v} \in \Delta^{K-1}$ with $\norm{\boldsymbol{u}}_1=\norm{\boldsymbol{v}}_1 = 1$ be two probability distributions. The (\emph{Kullback-Leibler}) KL-divergence (or else the relative entropy) is defined as:
\begin{equation}
    \mathrm{KL}(\boldsymbol{u}||\boldsymbol{v}) = \sum_{j=1}^K u_j \log \frac{u_j}{v_j}.
\end{equation}
 The KL-divergence is not a metric since it is not symmetric under the interchange of $\boldsymbol{u}$ and $\boldsymbol{v}$ but satisfies all the properties of a monotonic distance measure. Specifically, the KL-divergence satisfies:
\begin{itemize}
    \item $\mathrm{KL}(\boldsymbol{u}||\boldsymbol{v}) = 0 \implies \boldsymbol{u}=\boldsymbol{v}$.

    \item $\mathrm{KL}(\boldsymbol{u}||\boldsymbol{v}) \geq 0$ for all $\boldsymbol{u},\boldsymbol{v}\in \Delta^{K-1}$.

    \item $\mathrm{KL}(T(\boldsymbol{u})||T(\boldsymbol{v})) \leq  \mathrm{KL}(\boldsymbol{u}||\boldsymbol{v})$ for every stochastic map $T$.
\end{itemize}

\subsubsection{Classical Fisher Information Matrix}

In our analysis below, we will assume that the measurement basis $\mathcal{M}$ is fixed. Thus, the resulting probability distribution will only depend on the choice of parameters $\boldsymbol{\theta}$. Let $p_{\mathcal{M}}(\boldsymbol{\theta})$ be the probability distribution after measuring the state $\ket{\psi(\boldsymbol{\theta})}$ and $p_{\mathcal{M}}(\boldsymbol{\theta+\boldsymbol{\epsilon}})$ be the probability distribution after measuring the state $\ket{\psi(\boldsymbol{\theta+\boldsymbol{\epsilon}})}$. If the shift vector $\boldsymbol{\epsilon}$ is small, we can Taylor expand the KL-divergence as:
\begin{equation}
\begin{gathered}
    \mathrm{KL}(p_{\mathcal{M}}(\boldsymbol{\theta})|| p_{\mathcal{M}}(\boldsymbol{\theta+\boldsymbol{\epsilon}})) = \mathrm{KL}(p_{\mathcal{M}}(\boldsymbol{\theta})|| p_{\mathcal{M}}(\boldsymbol{\theta})) + \\
    \sum_{i=1}^m  \epsilon_i\frac{\partial \mathrm{KL}(p_{\mathcal{M}}(\boldsymbol{\theta})|| p_{\mathcal{M}}(\boldsymbol{\theta}+\boldsymbol{\epsilon}))}{\partial \epsilon_i}\Big{|}_{\boldsymbol{\epsilon} = 0} + \\
    \frac{1}{2}\sum_{i,j=1}^m  \epsilon_i \epsilon_j\frac{\partial^2 \mathrm{KL}(p_{\mathcal{M}}(\boldsymbol{\theta})|| p_{\mathcal{M}}(\boldsymbol{\theta}+\boldsymbol{\epsilon}))}{\partial \epsilon_i\partial \epsilon_j}\Big{|}_{\boldsymbol{\epsilon} = 0 } + \mathcal{O}(\norm{\boldsymbol{\epsilon}}_1^3)
\label{eq:fisher_derivation}
\end{gathered}
\end{equation}
where if we neglect third-order terms, we can write:
\begin{equation}
        \mathrm{KL}(p_{\mathcal{M}}(\boldsymbol{\theta})|| p_{\mathcal{M}}(\boldsymbol{\theta+\boldsymbol{\epsilon}})) \approx \frac{1}{2}\boldsymbol{\epsilon}^T [\mathcal{F}_C^{\mathcal{M}}(\boldsymbol{\theta})] \boldsymbol{\epsilon} = \frac{1}{2}\norm{\boldsymbol{\epsilon}}_{\mathcal{F}_C^{\mathcal{M}}}^2
\end{equation}
where the first term in Eq. \eqref{eq:fisher_derivation} is zero since $ \mathrm{KL}(p_{\mathcal{M}}(\boldsymbol{\theta})|| p_{\mathcal{M}}(\boldsymbol{\theta}))= 0$ and the second term is also zero since it corresponds to a minimum\footnote{The KL-divergence is non-negative in general, and zero at $\boldsymbol{\epsilon}=0$.}. We can show that, for the choice of KL-divergence as a distance measure, the elements of the CFIM can be calculated as:
\begin{equation}
    [\mathcal{F}_C^{\mathcal{M}}(\boldsymbol{\theta})]_{ij} = \sum_l \frac{1}{p_l(\boldsymbol{\theta})}\frac{\partial p_l(\boldsymbol{\theta})}{\partial \theta_i} \frac{\partial p_l(\boldsymbol{\theta})}{\partial \theta_j}.
\end{equation}
For completeness, we have added the CFIM derivation in Appendix \ref{sec:classical_fisher_derivation} but one can also find it in \cite{meyer2021fisher}. It is worth noting that if a different distance measure was chosen, then it would always return a constant multiple of the CFIM as long as the choice of the distance measure is \emph{monotonic} \cite{morozova1991markov}.

\subsubsection{Measuring the Classical Fisher Information Matrix}

Let $O$ be any observable (Hermitian operator $O^\dagger = O$). The parameter shift rule \cite{romero2018strategies, Schuld_2019, PhysRevA.103.012405, wierichs2022general} states that the derivatives of the expectation value of an observable $O$ can be calculated as a linear combination of the expectation value of the observable at two different parameter settings:
\begin{equation}
    \frac{\partial \langle O(\boldsymbol{\theta}) \rangle}{\partial \theta_j} = r \left[ \langle O(\boldsymbol{\theta} + \frac{\pi}{4r}\boldsymbol{\hat{e}_j}) \rangle - \langle O(\boldsymbol{\theta} - \frac{\pi}{4r}\boldsymbol{\hat{e}_j}) \rangle \right]
\end{equation}
where $\pm r$ are the eigenvalues of the generator $g_j$ (see Eq. \eqref{eq:parameterized_quantum_circuit}) corresponding to the gate of the parameter $\theta_j$\footnote{The interested reader can find general parameter-shift rules in \cite{wierichs2022general}.}. In our case, $O = \Pi_l = \ket{l}\bra{l}$. Then, the CFIM elements, given by Eq. \eqref{eq:classical_fisher_elements}, can be calculated as:
\begin{equation}
\small
\begin{gathered}
    [\mathcal{F}_C]_{ij} = \sum_l \frac{r^2}{ \langle \Pi_l(\boldsymbol{\theta}) \rangle}\left[ \langle \Pi_l(\boldsymbol{\theta} + \frac{\pi}{4r}\boldsymbol{\hat{e}_i}) \rangle - \langle \Pi_l(\boldsymbol{\theta} - \frac{\pi}{4r}\boldsymbol{\hat{e}_i}) \rangle \right] \\
    \cross \left[ \langle \Pi_l(\boldsymbol{\theta} + \frac{\pi}{4r}\boldsymbol{\hat{e}_j}) \rangle - \langle \Pi_l(\boldsymbol{\theta} - \frac{\pi}{4r}\boldsymbol{\hat{e}_j}) \rangle \right].
\label{eq:classical_fisher_elements}
\end{gathered}
\end{equation}

We can see that the elements of the CFIM can be expressed as products of first-order derivatives. As such, we can introduce Corollary \ref{corollary:cfim_resources} that quantifies the classical and quantum resources needed for the calculation of the CFIM.
\begin{corollary}
        Consider a parameterized quantum circuit consisting of $m$ parameterized quantum gates. Any classical Fisher information matrix (CFIM) requires $\mathcal{O}_Q(m)$ different quantum state preparations and $\mathcal{O}_C(m^2)$ classical resources to post-process the measurements and store the matrix.
\label{corollary:cfim_resources}
\end{corollary}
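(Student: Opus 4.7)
My plan is to inspect the parameter-shift expression in Eq. \eqref{eq:classical_fisher_elements} directly and split the work into a quantum data-acquisition phase and a purely classical assembly phase. The structural observation that drives everything is that every entry $[\mathcal{F}_C]_{ij}$ is a bilinear combination of single-parameter derivatives $\partial_i p_l$ and $\partial_j p_l$; no ``joint'' shift involving both $\theta_i$ and $\theta_j$ ever appears. Consequently, the set of parameter configurations at which the quantum circuit must actually be executed is indexed only by the single index $i$, not by the pair $(i,j)$.

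On the quantum side I would enumerate the distinct state preparations. For each parameter index $i\in\{1,\dots,m\}$ and each sign, preparing $V(\boldsymbol{\phi})U(\boldsymbol{\theta}\pm\tfrac{\pi}{4r}\boldsymbol{\hat{e}_i})\ket{0}$ once and measuring it in the fixed basis $\mathcal{M}$ simultaneously produces estimates of all expectation values $\langle \Pi_l(\boldsymbol{\theta}\pm\tfrac{\pi}{4r}\boldsymbol{\hat{e}_i})\rangle$ for every outcome $l$; no separate experiment is needed per value of $l$. Together with the single unshifted preparation required to estimate the denominators $p_l(\boldsymbol{\theta})$, this yields $2m+1$ distinct circuits, i.e.\ $\mathcal{O}_Q(m)$ different quantum state preparations.

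For the classical side I would note that once the above probabilities are stored, each of the $m^2$ entries of $\mathcal{F}_C$ is obtained by substituting them into Eq. \eqref{eq:classical_fisher_elements}, which amounts to a sum over the at most $K$ measurement outcomes with a constant number of arithmetic operations each. Since $K$ is at most polynomial in $n$ and independent of $m$, the assembly cost and the storage cost are both $\mathcal{O}(m^2)$ once those $\text{poly}(n)$ factors are absorbed into the classical-resource count, matching $\mathcal{O}_C(m^2)$.

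The main place one must be careful is the definition of ``state preparation,'' which is the only nontrivial point in an otherwise bookkeeping argument. I would explicitly separate a single prepare-and-measure circuit execution from the shot multiplicity used to statistically estimate each $\langle \Pi_l\rangle$, so that the $\mathcal{O}(m)$ figure unambiguously refers to the number of distinct circuits that must be compiled and run on the device, consistent with how ``quantum resources'' are accounted for elsewhere in the paper; the shot overhead is then the same constant factor one pays for any single expectation-value estimation and does not affect the asymptotic scaling in $m$.
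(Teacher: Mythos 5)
Your proposal is correct and follows essentially the same route as the paper: the paper justifies Corollary~\ref{corollary:cfim_resources} by observing that Eq.~\eqref{eq:classical_fisher_elements} is built entirely from single-parameter first-order derivatives, so $2m$ shifted preparations plus one unshifted preparation (each measured once in the fixed basis $\mathcal{M}$ to yield all outcomes $l$ simultaneously) give the $2m+1=\mathcal{O}_Q(m)$ count, with the $m\times m$ assembly and storage handled classically in $\mathcal{O}_C(m^2)$. Your explicit separation of distinct circuit compilations from shot multiplicity is a sensible clarification consistent with the paper's resource accounting.
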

As we discuss later, this results in the CFIM requiring quadratically less quantum resources than the QFIM.

\subsection{Distance of pure density operators}

Just as we defined a measure of distance in the space of probability distributions, we could also measure distances in the space of density operators. In this paper, we will focus only on \emph{pure} quantum states ($\tr\rho^2 = 1$).

As we discussed for the classical case, there is a unique underlying metric in the space of probability distributions independent of the choice of distance measure. Different distance measures will always yield a constant multiple of the CFIM. However, this is not true in the quantum case as Petz \cite{petz1996monotone} proved that there exist infinitely many metrics. If we restrict ourselves to the space of \emph{pure} quantum states, there is a unique underlying metric independent of the choice of the distance measure \cite{stokes2020quantum}. As a distance measure, we choose the \emph{infidelity} between pure quantum states which for two quantum states $\ket{\psi(\boldsymbol{\theta})}$, $\ket{\psi(\boldsymbol{\theta} + \boldsymbol{\epsilon})})$ is defined as:
\begin{equation}
    d_f(\ket{\psi(\boldsymbol{\theta})}, \ket{\psi(\boldsymbol{\theta} + \boldsymbol{\epsilon})}) = 1 - |\bra{\psi(\boldsymbol{\theta})}\ket{\psi(\boldsymbol{\theta }+ \boldsymbol{\epsilon})}|^2.
\end{equation}

\subsubsection{Quantum Fisher Information Matrix}

Let $\ket{\psi(\boldsymbol{\theta})}$ and $\ket{\psi(\boldsymbol{\theta} + \boldsymbol{\epsilon})}$ be two parameterized quantum states. If the shift vector $\boldsymbol{\epsilon}$ is small, then we can Taylor expand the infidelity as:
\begin{equation*}
\begin{gathered}
    d_f(\ket{\psi(\boldsymbol{\theta})},\ket{\psi(\boldsymbol{\theta}+ \boldsymbol{\epsilon})}) = d_f(\ket{\psi(\boldsymbol{\theta})},\ket{\psi(\boldsymbol{\theta})}) + \\
    \sum_{i=1}^m \epsilon_i \frac{\partial  d_f(\ket{\psi(\boldsymbol{\theta})},\ket{\psi(\boldsymbol{\theta}+ \boldsymbol{\epsilon})})}{\partial \epsilon_i}\Bigg{|}_{\boldsymbol{\epsilon}=0} + \\
    \frac{1}{2}\sum_{i,j=1}^m \epsilon_i \epsilon_j\frac{\partial^2  d_f(\ket{\psi(\boldsymbol{\theta})},\ket{\psi(\boldsymbol{\theta}+ \boldsymbol{\epsilon})})}{\partial \epsilon_i \partial \epsilon_j} \Bigg{|}_{\boldsymbol{\epsilon}=0} +  \mathcal{O}(\norm{\boldsymbol{\epsilon}}_1^3) 
\end{gathered}
\end{equation*}
where if we neglect third-order terms, we can write
\begin{equation}
 d_f(\ket{\psi(\boldsymbol{\theta})},\ket{\psi(\boldsymbol{\theta}+ \boldsymbol{\epsilon})}) \approx \frac{1}{4} \boldsymbol{\epsilon}^T [\mathcal{F}_Q(\boldsymbol{\theta})] \boldsymbol{\epsilon}
    = \frac{1}{4} \norm{\boldsymbol{\epsilon}}_{\mathcal{F}_Q}^2 
\label{eq:quantum_fisher}
\end{equation}
where similarly to the classical relative entropy, the first two terms vanish because $ d_f(\ket{\psi(\boldsymbol{\theta})},\ket{\psi(\boldsymbol{\theta})})$ corresponds to a minimum with a value equal to zero. The matrix $\mathcal{F}_Q(\boldsymbol{\theta})$ is called the \emph{quantum Fisher information matrix} (QFIM) and acts as a metric in the space of quantum states, giving information about the geometry of states near the vicinity of the state $\ket{\psi(\boldsymbol{\theta})}$. The matrix elements of the QFIM are calculated to be the real part of the \emph{Fubini-Study metric} (see \cite{stokes2020quantum, meyer2021fisher}):
\begin{equation}
\begin{gathered}
    [\mathcal{F}_Q(\boldsymbol{\theta})]_{ij} = 4 \; \mathrm{Re}\Bigg[\frac{ \partial \bra{\psi(\boldsymbol{\theta})}}{\partial \theta_i} \frac{\partial \ket{\psi(\boldsymbol{\theta})}}{\partial \theta_j} \\
   - \frac{ \partial \bra{\psi(\boldsymbol{\theta})}}{\partial \theta_i}\ket{\psi(\boldsymbol{\theta})}\bra{\psi(\boldsymbol{\theta})}\frac{ \partial \ket{\psi(\boldsymbol{\theta})}}{\partial \theta_j} \Bigg].
\label{eq:qfim_elements}
\end{gathered}
\end{equation}

Intuitively, the QFIM acts as a metric in the space of quantum states. It provides a description of the geometry of the underlying state, giving information on how the parameterized quantum state changes if we vary a given parameter. Large eigenvalues of the QFIM will result in significant changes in the quantum state (with respect to a distance measure) even for small variations towards the direction of the corresponding eigenvector. On the other hand, zero eigenvalues will correspond to \emph{singularities}, i.e. points in the space of parameters where changes will have no effect on the underlying quantum state \cite{haug2021capacity}.

\subsubsection{Measuring the Quantum Fisher Information Matrix}
\label{subsec:measuring_qfisher}

So far in the literature, there has been extensive research on how to calculate the elements of QFIM given by Eq. \eqref{eq:qfim_elements}. In \cite{PhysRevA.103.012405} explained how to use parameter-shift rules to calculate QFIM while \cite{banchi2021measuring} introduced stochastic parameter-shift rules. On the other hand, \cite{haug2021capacity, mcardle2019variational} proposed an alternative way with the Hadamard-overlap test, using an extra ancilla qubit, but requires larger device connectivity. Overall, the quantum and classical resources needed for the calculation of the QFIM are stated in Corollary \ref{corollary:qfim_resources}.

\begin{corollary}
    Consider a parameterized quantum circuit, consisting of $m$ parameterized quantum gates. The quantum Fisher information matrix at any parameter configuration $\boldsymbol{\theta}$ requires $\mathcal{O}_Q(m^2)$ different quantum state preparations and $\mathcal{O}_C(m^2)$ classical resources to post-process the measurement and store the matrix.
\label{corollary:qfim_resources}
\end{corollary}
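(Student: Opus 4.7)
The plan is to count, entry by entry, the number of distinct circuits needed to evaluate the QFIM and to compare this to the trivial upper bound on the classical overhead. Since $\mathcal{F}_Q(\boldsymbol{\theta})$ is an $m \times m$ real symmetric matrix, it has exactly $m(m+1)/2 = \mathcal{O}(m^2)$ independent entries; storing these, and carrying out the elementary linear-algebra post-processing that consumes them (inverting or regularising the matrix and applying it to a gradient vector), both fit inside $\mathcal{O}_C(m^2)$ classical resources. This settles the classical side of the statement immediately.

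For the quantum side I would begin from Eq.~\eqref{eq:qfim_elements} and rewrite each element so that it becomes a mixed second partial derivative of a single scalar quantity, namely the fidelity $F(\boldsymbol{\theta}, \boldsymbol{\theta}') = |\langle \psi(\boldsymbol{\theta}) \mid \psi(\boldsymbol{\theta}') \rangle|^2$ between the reference state and a shifted state. Concretely, $[\mathcal{F}_Q(\boldsymbol{\theta})]_{ij}$ is proportional to $\partial^2 F / \partial \theta_i' \partial \theta_j'$ evaluated at $\boldsymbol{\theta}' = \boldsymbol{\theta}$, up to a correction that factorises through single-parameter overlaps of the form $\langle \psi(\boldsymbol{\theta}) \mid \partial_i \psi(\boldsymbol{\theta}) \rangle$. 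Applying the parameter-shift rule twice expresses the mixed derivative as a fixed linear combination --- with a number of terms that does not depend on $m$ --- of fidelities of the form $F\bigl(\boldsymbol{\theta},\, \boldsymbol{\theta} \pm \tfrac{\pi}{4r}\hat{e}_i \pm \tfrac{\pi}{4r}\hat{e}_j\bigr)$. Each such fidelity is estimable from $\mathcal{O}(1)$ distinct quantum state preparations, for instance via a swap test, a Hadamard-overlap test with one ancilla, or by running $U^\dagger(\boldsymbol{\theta}) U(\boldsymbol{\theta}')\ket{0}$ and reading out the all-zeros probability.

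Summing $\mathcal{O}(1)$ preparations over the $\mathcal{O}(m^2)$ independent matrix elements yields the claimed $\mathcal{O}_Q(m^2)$ scaling. I would close by remarking that the alternative estimation schemes cited in Section~\ref{subsec:measuring_qfisher} --- the Hadamard-overlap construction of \cite{haug2021capacity, mcardle2019variational} and the stochastic parameter-shift variant of \cite{banchi2021measuring} --- all obey the same $\mathcal{O}(m^2)$ scaling up to constants, so the bound is robust to the specific estimator.

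The step to watch most carefully, and the one I expect to be the main obstacle, is ruling out the possibility that the ``$\mathcal{O}(1)$ per entry'' cost hides a disguised factor of $m$. Two worries arise. First, the correction term in Eq.~\eqref{eq:qfim_elements} built from the outer product with $\ket{\psi(\boldsymbol{\theta})}$ must not itself contribute more than $\mathcal{O}(m^2)$ preparations; this is immediate once one observes that it factorises into products of single-parameter overlaps which can be precomputed with only $\mathcal{O}(m)$ preparations and then combined classically for every pair $(i,j)$. Second, one should verify that the shifted circuits indexed by different pairs $(i,j)$ are genuinely distinct and cannot be amortised; this follows because each element requires a shift along two separate coordinate directions, producing the expected two-dimensional grid of $\Theta(m^2)$ circuits that controls the final bound.
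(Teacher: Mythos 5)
Your argument is correct and follows essentially the same route as the paper's own justification (given in Appendix F): each QFIM entry is a mixed second derivative of the fidelity $|\langle\psi(\boldsymbol{\theta})|\psi(\boldsymbol{\theta}')\rangle|^2$, a double application of the parameter-shift rule turns it into a fixed number of overlaps $|\bra{0}U^\dagger(\boldsymbol{\theta})U(\boldsymbol{\theta}\pm\tfrac{\pi}{2}\hat{e}_i\pm\tfrac{\pi}{2}\hat{e}_j)\ket{0}|^2$, each costing $\mathcal{O}(1)$ state preparations, and the $\Theta(m^2)$ distinct shift pairs give the quadratic quantum cost while storage is $\mathcal{O}_C(m^2)$. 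One small overreach: your parenthetical claim that \emph{inverting} the matrix also fits in $\mathcal{O}_C(m^2)$ is false (inversion is $\mathcal{O}(m^3)$ by standard methods), but the corollary only asserts the cost of post-processing and storage, so this does not affect the statement.
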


\subsection{Quantum Natural Gradient}

The Quantum Natural Gradient (QNG) \cite{stokes2020quantum} is an optimization algorithm suited for variational quantum algorithms. Instead of updating the parameters in the direction of the negative gradient (of the loss function), the algorithm considers the changes happening in the space of parameterized quantum states. Specifically, at each iteration, the parameters are changed according to the update rule:
\begin{equation}
    \boldsymbol{\theta_{k+1}} = \boldsymbol{\theta_k} -\eta \mathcal{F}_Q(\boldsymbol{\theta_k})^{+}\grad \mathcal{L}(\boldsymbol{\theta_k})
\label{eq:qng_update_rule}
\end{equation}
where $\mathcal{F}_Q(\boldsymbol{\theta_k})^{+}$ is the \emph{Moore-Penrose inverse} (pseudoinverse) \footnote{The \emph{pseudoinverse} of a matrix $A$ corresponds to the inverse of $A$ that is defined on the space orthogonal to the kernel.} of the \emph{quantum Fisher information matrix} (QFIM). Intuitively, QNG performs large steps in the directions where the state changes by a small amount and takes smaller steps in the directions where the state changes by a large amount. Although the simulated experiments in \cite{stokes2020quantum} showed that the convergence speed (in terms of optimization iterations) is improved significantly compared to first-order local optimizers, the bottleneck is that it requires $\mathcal{O}_Q(m^2)$ state preparations at each iteration which results in a big drawback for near term devices. This has the immediate implication that the actual quantum resources needed to implement the algorithm are quite large limiting its actual practicality. We discuss that thoroughly in the next sections.

\subsubsection{Steepest Descent}
\label{subsubsection:steepest_descent}

At this point, it would be fruitful to provide a more geometric explanation behind the update rule \eqref{eq:qng_update_rule} of QNG and its approximation through the update rule of Eq. \eqref{eq:classical_fisher_update}. Consider the first-order Taylor expansion of loss function $\mathcal{L}(\boldsymbol{\theta})$ at the point $\boldsymbol{\theta} + \boldsymbol{v}$:
\begin{equation}
    \mathcal{L}(\boldsymbol{\theta} + \boldsymbol{v}) \approx \mathcal{L}(\boldsymbol{\theta}) +\grad \mathcal{L}(\boldsymbol{\theta})^T \boldsymbol{v}.
\end{equation}
Steepest descent \cite{boyd2004convex} aims to find a direction $\boldsymbol{v}$ such that the directional derivative becomes as small as possible. Since $\grad \mathcal{L}(\boldsymbol{\theta})^T \boldsymbol{v}$ is linear in $\boldsymbol{v}$, it can be made as small as we desire. So, to make the question sensible, we define the \emph{normalized steepest descent} as:
\begin{equation}
    \Delta \boldsymbol{\theta}_{\text{nsd}} = \arg \min_{\boldsymbol{v}:\norm{\boldsymbol{v}}=1}\{\grad \mathcal{L}(\boldsymbol{\theta})^T \boldsymbol{v}\}
\label{eq:steepest_descent}
\end{equation}
where $\norm{\cdot}$ can be any vector norm. For example, choosing $\norm{\cdot} = \norm{\cdot}_2$ (the $l_2$-norm) the steepest descent becomes gradient descent or choosing $\norm{\cdot} = \norm{\cdot}_1$ (the $l_1$-norm) becomes \emph{coordinate descent} \cite{boyd2004convex}. We are interested in the case where $\norm{\cdot}$ is chosen to be $\norm{\cdot}_P$, where $P \succcurlyeq 0$ describes the intrinsic geometry in the parameterized space (or at least an approximation of it). Notice that the underlying geometry of the parameters of a parameterized quantum circuit is not \emph{Euclidean} but follows a \emph{Riemannian structure}. That is, the distance between vectors $\boldsymbol{\theta}$ and $\boldsymbol{\theta} + \delta \boldsymbol{\theta}$ is:
\begin{equation}
    d(\boldsymbol{\theta}, \boldsymbol{\theta} + \delta \boldsymbol{\theta})^2 = \delta\boldsymbol{\theta} G(\boldsymbol{\theta}) \delta \boldsymbol{\theta} = \norm{\delta\boldsymbol{\theta}}_{G(\boldsymbol{\theta})}^2
\end{equation}
where $G(\boldsymbol{\theta})$ is the \emph{Riemannian metric}. Since parameterized quantum circuits generate parameterized quantum states $\ket{\psi(\boldsymbol{\theta})}$, the actual geometry of the parameters is characterized by changes happening in the quantum state. As a result, the corresponding metric that describes the geometry of the parameters at a point $\boldsymbol{\theta}$ is the QFIM, i.e. $G(\boldsymbol{\theta}) = \mathcal{F}_Q(\boldsymbol{\theta})$.
\begin{lemma}
    Consider a parameterized quantum circuit parameterized by a vector $\boldsymbol{\theta}$ that generates a quantum state $\ket{\psi(\boldsymbol{\theta})}$. The normalized steepest descent direction of the loss function $\mathcal{L}(\boldsymbol{\theta})$ is:
    \begin{equation}
        - \frac{\mathcal{F}_Q^{-1}(\boldsymbol{\theta})\grad\mathcal{L}(\boldsymbol{\theta})}{\norm{\mathcal{F}_Q^{-1/2}(\boldsymbol{\theta})\grad\mathcal{L}(\boldsymbol{\theta})}_2}
    \end{equation}
    where $\mathcal{F}_Q(\boldsymbol{\theta})$ is the QFIM at point $\boldsymbol{\theta}$
\end{lemma}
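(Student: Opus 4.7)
The plan is to compute the minimizer in Eq.~\eqref{eq:steepest_descent} explicitly for the norm $\norm{\cdot}_{\mathcal{F}_Q}$ induced by the QFIM, viewed as the Riemannian metric $G(\boldsymbol{\theta})=\mathcal{F}_Q(\boldsymbol{\theta})$. That is, I would solve
\begin{equation*}
\Delta \boldsymbol{\theta}_{\text{nsd}} = \arg\min_{\boldsymbol{v}\,:\,\boldsymbol{v}^T \mathcal{F}_Q(\boldsymbol{\theta})\boldsymbol{v}=1}\grad \mathcal{L}(\boldsymbol{\theta})^T \boldsymbol{v},
\end{equation*}
assuming temporarily that $\mathcal{F}_Q(\boldsymbol{\theta})\succ 0$ so that the ordinary inverse and square root are well-defined; I would then remark that the singular case is handled identically by working on the orthogonal complement of the kernel and replacing the inverse by the Moore--Penrose pseudoinverse defined earlier.

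My first step would be a change of variables that flattens the Riemannian norm into the Euclidean one. Setting $\boldsymbol{u}=\mathcal{F}_Q^{1/2}(\boldsymbol{\theta})\boldsymbol{v}$ (with $\mathcal{F}_Q^{1/2}$ the unique positive square root), the constraint $\boldsymbol{v}^T\mathcal{F}_Q\boldsymbol{v}=1$ becomes $\norm{\boldsymbol{u}}_2=1$, and the objective becomes
\begin{equation*}
\grad \mathcal{L}(\boldsymbol{\theta})^T\mathcal{F}_Q^{-1/2}(\boldsymbol{\theta})\boldsymbol{u} = \bigl(\mathcal{F}_Q^{-1/2}(\boldsymbol{\theta})\grad \mathcal{L}(\boldsymbol{\theta})\bigr)^T \boldsymbol{u}.
\end{equation*}
By Cauchy--Schwarz, the minimum over unit-norm $\boldsymbol{u}$ is attained at
\begin{equation*}
\boldsymbol{u}^* = -\frac{\mathcal{F}_Q^{-1/2}(\boldsymbol{\theta})\grad \mathcal{L}(\boldsymbol{\theta})}{\norm{\mathcal{F}_Q^{-1/2}(\boldsymbol{\theta})\grad \mathcal{L}(\boldsymbol{\theta})}_2}.
\end{equation*}
Inverting the substitution, $\boldsymbol{v}^* = \mathcal{F}_Q^{-1/2}(\boldsymbol{\theta})\boldsymbol{u}^*$, and using $\mathcal{F}_Q^{-1/2}\mathcal{F}_Q^{-1/2}=\mathcal{F}_Q^{-1}$, delivers exactly the claimed expression.

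As a sanity check I would also give the equivalent Lagrange-multiplier derivation: stationarity of $\grad \mathcal{L}^T \boldsymbol{v}-\lambda(\boldsymbol{v}^T \mathcal{F}_Q \boldsymbol{v}-1)$ gives $\boldsymbol{v}\propto \mathcal{F}_Q^{-1}\grad \mathcal{L}$, and imposing the constraint fixes the proportionality constant to $\pm\,\norm{\mathcal{F}_Q^{-1/2}\grad \mathcal{L}}_2^{-1}$, the minus sign being selected to minimize rather than maximize. The only real subtlety, and what I would flag as the main obstacle, is the handling of kernel directions of $\mathcal{F}_Q(\boldsymbol{\theta})$ at singularities of the parameterization: along any such direction the QFIM norm vanishes so the constraint set is unbounded and the minimum is $-\infty$ unless $\grad\mathcal{L}$ has no component there. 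Restricting the optimization to $(\ker\mathcal{F}_Q)^\perp$ and using the Moore--Penrose pseudoinverse from Eq.~\eqref{eq:qng_update_rule} resolves this and recovers the same formula verbatim, which I would state as a short remark at the end.
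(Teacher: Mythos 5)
Your proof is correct and follows essentially the same route as the paper's: the substitution $\boldsymbol{u}=\mathcal{F}_Q^{1/2}\boldsymbol{v}$ to reduce the $\mathcal{F}_Q$-norm constraint to a Euclidean one, minimization of the resulting linear functional over the unit sphere, and inversion of the substitution. Your additional remarks (explicit Cauchy--Schwarz, the Lagrange-multiplier check, and the treatment of kernel directions via the Moore--Penrose pseudoinverse) only make explicit what the paper leaves implicit.
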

\begin{proof}
    In the case of a parameterized quantum circuit, the local geometry of the parameterized space is described by the QFIM. As such, distances (near a point $\boldsymbol{\theta}$) are measured with respect to the $\norm{\cdot}_P$, with $P = \mathcal{F}_Q(\boldsymbol{\theta})$. Then, the normalized steepest descent direction is given by:
    \begin{equation}
    \Delta \boldsymbol{\theta}_{\text{nsd}} = \arg \min_{\boldsymbol{v}:\norm{\boldsymbol{v}}_{\mathcal{F}_Q}=1}\{\grad \mathcal{L}(\boldsymbol{\theta})^T \boldsymbol{v}\}.
    \label{eq:steep_pqc}
    \end{equation}
We can make use of the substitution $\boldsymbol{u} = \mathcal{F}_Q^{1/2} \boldsymbol{v}$. In this case, the constraint $\norm{\boldsymbol{v}}_{\mathcal{F}_Q}=1$ is replaced by:
\begin{equation}
    \norm{\boldsymbol{v}}_{\mathcal{F}_Q}=1 \implies \norm{\boldsymbol{u}}_2=1.
\end{equation}
As such, solving Eq. \eqref{eq:steep_pqc} is equivalent to:
    \begin{equation}
    \Delta \boldsymbol{\theta}_{\text{nsd}} = \arg \min_{\boldsymbol{u}:\norm{\boldsymbol{u}}_2=1}\{\grad \mathcal{L}(\boldsymbol{\theta})^T \mathcal{F}_Q^{-1/2}\boldsymbol{u}\}.
    \label{eq:steep_pqc_trans}
    \end{equation}
Thus, by solving Eq. \eqref{eq:steep_pqc_trans} and making again the substitution $v = \mathcal{F}_Q^{-1/2}u$ we find that the steepest descent direction is:
    \begin{equation*}
        - \frac{\mathcal{F}_Q^{-1}(\boldsymbol{\theta})\grad\mathcal{L}(\boldsymbol{\theta})}{\norm{\mathcal{F}_Q^{-1/2}(\boldsymbol{\theta})\grad\mathcal{L}(\boldsymbol{\theta})}_2}.
    \end{equation*}
\end{proof}

\section{Random Natural Gradient}
\label{sec:random_natural_gradient}

In this section, we will outline our first main result, which is a novel optimization algorithm called \emph{Random Natural Gradient} (RNG). The update rule of RNG is given by the formula:
\begin{equation}
\begin{gathered}
    \boldsymbol{\theta}_{k+1} = \boldsymbol{\theta}_k - \eta [\mathcal{F}_C^{\mathcal{M}}(\boldsymbol{\theta}_k)]^{+} \grad{\mathcal{L}}(\boldsymbol{\theta}_k) \\
    \text{Random measurement $\mathcal{M}$}
\label{update_rule_rng}
\end{gathered}
\end{equation}
where the random measurement $\mathcal{M}$ is performed by first applying a random unitary $V(\boldsymbol{\phi})$ on the parameterized state $\ket{\psi(\boldsymbol{\theta}_k)}$ and then measuring on the computational basis. In our analysis, we choose the random unitaries to be hardware-efficient parameterized quantum circuits for which we uniformly sample the parameters of the parameterized gates.

The intuition behind this update rule is relatively simple and is thoroughly explained in the following two sections. As we will explain, the optimization part in VQAs requires an update rule that can be calculated efficiently for any hope of quantum advantage. Additionally, the learning rate should carry some information about the corresponding state and how an infinitesimal change in a parameter would change it.

First of all, we can visualize measuring on a random basis and constructing the CFIM as a way to approximate the QFIM. Similarly, the authors of \cite{stokes2020quantum} suggested that instead of calculating the QFIM (as it would result in a tedious calculation), one could calculate only block-diagonal elements. This strategy would require fewer quantum resources but carries no physical intuition on why such an approximation would be valid, especially when there are high correlations between elements of different blocks.

Finding the measurement basis that would result in the optimal step is not a task that can be calculated efficiently. For that reason, one could draw a measurement basis at random and then construct the CFIM on that basis. Our experiments showed that a random CFIM has an increased rank (e.g. compared to measurements in $Z$-basis), and as such, more directions can be explored during the optimization (see Figure \ref{fig:ranks}). This has the further implication that the classical optimization part may avoid local minima as was explored in the QNG case in \cite{wierichs2020avoiding}. Similar findings were also found in \cite{garcia2023effects}, where the noise may increase the rank of QFIM, so more directions can be explored.

From a practical perspective, the update rule \eqref{update_rule_rng} can be calculated efficiently (see Corollary \ref{corollary:cfim_resources}) and, as we will show, improves the convergence dramatically. At each time step, the quantum resources (the number of quantum state preparations) needed are $2m$ for the gradient calculation and $2m+1$ for the calculation of the CFIM on a random basis (see Eq. \eqref{eq:classical_fisher_elements}) for a total of $4m + 1$ quantum states. Then, the classical computer post-processes the $2m + 1$ random basis measurements and utilizes a classical memory of size $m\times m$ for the random CFIM. This update rule is iteratively applied with a different measurement basis until convergence to a local (or a global) minimum occurs. 

Furthermore, our method inherits the advantage that the depth of the quantum circuit required to calculate the matrix elements of the random CFIM is less than that of the QFIM. In RNG, the depth of the additional unitary that is required for the calculation of the random CFIM is user-dependent and as we show in our experiments in Sec. \ref{sec:results}, the user can select shallow random quantum circuits and converge significantly faster than QNG.

Specifically, as we discussed in Sec \ref{subsec:measuring_qfisher}, for the calculation of QFIM, one requires quantum circuits of depth twice the one needed to generate the parameterized quantum state $\ket{\psi(\boldsymbol{\theta})}$. This imposes a bottleneck for parameterized quantum states that require large depths. However, for RNG, the additional quantum circuit is user-specified and depends on the architecture required for the random measurement (see more details in \ref{appendix:depth_overhead}). Our algorithm is outlined in Algorithm \ref{alg:random_natural_gradient}.

Finally, we would like to stress how RNG avoids singularities in the parameter space \cite{yamamoto2019natural}. Consider a fixed measurement basis $\mathcal{M}$. There are points in the parameters space where a small displacement in the parameters may not result in any change in the probabilities observed. This would result in a CFIM with degenerate zero eigenvalues. In a practical scenario, close to such a point, a natural gradient optimizer will make very large steps, prohibiting it from convergence. However, by switching the basis we can now avoid the singularities as for the new observables, the previous point may result in completely different probability distributions, and as such the optimizer will continue making small steps.

\begin{algorithm}[h!]
\caption{Random Natural Gradient}
\label{alg:random_natural_gradient}
\SetKwInOut{Input}{Input}
\Input{Problem Hamiltonian $\mathcal{H}$\;
Ansatz family $\ket{\psi(\boldsymbol{\theta})}=U(\boldsymbol{\theta})\ket{0}$\;
Total iterations $K$\;
Loss function $\mathcal{L}(\boldsymbol{\theta})$\;
Initial parameters $\boldsymbol{\theta} = \boldsymbol{\theta_0}$\;
Learning rate $\eta$\;}
\For{$k=1,2,\ldots, K$}{
Calculate derivatives $\frac{\partial \mathcal{L}(\boldsymbol{\theta})}{\partial \theta_i} \; \forall \; i \in [M]$\;
Shuffle a measurement basis $\mathcal{M}$\;
Calculate the Classical Fisher Information Matrix $\mathcal{F}_C^{\mathcal{M}}(\boldsymbol{\theta})$\;
Update $\boldsymbol{\theta}$ as $\boldsymbol{\theta} = \boldsymbol{\theta} - \eta [\mathcal{F}_C^{\mathcal{M}}(\boldsymbol{\theta})]^{+} \grad{\mathcal{L}}(\boldsymbol{\theta})$\;
}
\Return $\boldsymbol{\theta}$
\end{algorithm} 

\section{Local Optimization}
\label{sec:local_optimization}

In this section, we provide the motivation and theoretical intuition behind the update rule of RNG in Eq. \eqref{update_rule_rng} and argue why such an update is desirable. \emph{Local optimization} refers to the technique of following a trajectory in a region of the loss landscape and converging to a (possibly local) minimum. In this paper, we consider the loss function to be the expectation value of the energy of a parameterized quantum state:
\begin{equation}
    \mathcal{L}(\boldsymbol{\theta}) = \tr (\rho(\boldsymbol{\theta})\mathcal{H})
\label{eq:loss_function}
\end{equation}
where $\mathcal{H}$ is the Hamiltonian of the problem. Vanilla Gradient Descent (GD) iteratively updates the parameters $\boldsymbol{\theta}$ by following the direction of the negative gradient. The update rule is given by:
\begin{equation}
    \boldsymbol{\theta}_{k+1} = \boldsymbol{\theta}_k -\eta \grad \mathcal{L}(\theta_k)
\label{eq:gradient_descent}
\end{equation}
where $\eta > 0$ is a tunable hyperparameter that is crucial for both the speed and convergence of the algorithm. The biggest bottleneck in applying GD in a VQA setting is that a small choice of $\eta$ will require numerous quantum state preparations in the quantum computer, especially in the region where $\grad \mathcal{L}(\boldsymbol{\theta})\rightarrow 0$, while a large choice of $\eta$  may result in ``missing'' the minimum. In the former case, the quantum computer will require a very large number of circuit repetitions to acquire the desired accuracy, but also multiple and different quantum state preparations. A useful tool from the classical optimization literature is the \emph{proximal point method} \cite{parikh2014proximal} where the update rule is given by:
\begin{equation}
\begin{gathered}
    \boldsymbol{\theta_{k+1}} = \text{prox}_{\mathcal{L}, \lambda}(\boldsymbol{\theta_k}) \\
    = \arg \min_{\boldsymbol{\theta}}\left[ \mathcal{L}(\boldsymbol{\theta}) + \lambda q(\boldsymbol{\theta}, \boldsymbol{\theta_k})\right]
\label{eq:proximal_update}
\end{gathered}
\end{equation}
where $q(\boldsymbol{\theta}, \boldsymbol{\theta_k})$ is a \emph{dissimilarity function} that measures distance between the two vectors $\boldsymbol{\theta}, \boldsymbol{\theta_k}$. When $q$ is chosen to be the squared Euclidean distance:
\begin{equation}
    q(\boldsymbol{\theta}, \boldsymbol{\theta_k}) = \frac{1}{2}\norm{\boldsymbol{\theta}- \boldsymbol{\theta_k}}^2_2
\end{equation}
then the proximal update becomes the ordinary GD given by the update rule in Eq. \eqref{eq:gradient_descent} with $\eta=\lambda^{-1}$. In that case, the dissimilarity function acts as a penalty term that prohibits big steps in the space of parameters.

In \cite{stokes2020quantum} the authors claimed that the classical optimization algorithm should adapt the updated parameters according to the changes happening in the state-space i.e. update $\boldsymbol{\theta}$ according to:
\begin{equation}
    \boldsymbol{\theta}_{k+1} = \arg \min_{\boldsymbol{\theta}}\left[\mathcal{L}(\boldsymbol{\theta}) + \frac{1}{2\eta}\norm{\boldsymbol{\theta} - \boldsymbol{\theta_k}}^2_{\mathcal{F}_Q}\right]
\end{equation}
where the term $\norm{\boldsymbol{\theta} - \boldsymbol{\theta_k}}^2_{\mathcal{F}_Q}\equiv(\boldsymbol{\theta} - \boldsymbol{\theta_k})^T \mathcal{F}_Q (\boldsymbol{\theta} - \boldsymbol{\theta_k})$ penalizes large steps in the state-space \footnote{Here $\mathcal{F}_Q$ acts a metric, stretching the vectors in the state-space accordingly.}. In this case, the update rule \eqref{eq:proximal_update} is reformulated to the \emph{Quantum Natural Gradient} (QNG) where the parameters are iteratively changed according to the rule given by Eq. \eqref{eq:qng_update_rule}.

At this point, we would like to state that the QNG update rule in Eq. \eqref{eq:qng_update_rule} falls into the more general category of \textit{preconditioning}. In general, preconditioning the GD update in Eq. \eqref{eq:gradient_descent} by a positive definite matrix $A$:
\begin{equation}
\begin{gathered}
    \boldsymbol{\theta_{k+1}} = \boldsymbol{\theta_k} - A^{-1}\grad \mathcal{L}(\boldsymbol{\theta_k})\\
    A \succ 0
\end{gathered}
\end{equation}
results in a \emph{descent direction}. To see this, consider the Taylor expansion of the loss function \eqref{eq:loss_function} around the current iterate $\boldsymbol{\theta_k}$:
\begin{equation}
\begin{aligned}
    \mathcal{L}(\boldsymbol{\theta}) = \mathcal{L}(\boldsymbol{\theta_k}) + \grad\mathcal{L}(\boldsymbol{\theta_k})^T(\boldsymbol{\theta}-\boldsymbol{\theta_k}) \\ 
    +\frac{1}{2}(\boldsymbol{\theta}-\boldsymbol{\theta_k})^TH(\boldsymbol{\theta}-\boldsymbol{\theta_k}) + \mathcal{O}(\norm{\boldsymbol{\theta}-\boldsymbol{\theta_k}}^3)
\end{aligned}
\end{equation}
where $H$ is the Hessian at the point $\boldsymbol{\theta_k}$ and let the updated point be $\boldsymbol{\theta_k} - A^{-1}\grad \mathcal{L}(\boldsymbol{\theta_k})$. Keeping only the first-order terms (this would be valid for example if we are in a region with small gradients or if we scale the matrix $A$ by a small factor $\eta$) then the loss function is:

\begin{equation}
\small
\begin{gathered}
    \mathcal{L}(\boldsymbol{\theta_k}- A^{-1}\grad \mathcal{L}(\boldsymbol{\theta_k})) = \mathcal{L}(\boldsymbol{\theta_k}) - \grad\mathcal{L}(\boldsymbol{\theta_k})^T A^{-1} \grad\mathcal{L}(\boldsymbol{\theta_k})\\
    \implies \mathcal{L}(\boldsymbol{\theta_k}- A^{-1}\grad \mathcal{L}(\boldsymbol{\theta_k}))  <  \mathcal{L}(\boldsymbol{\theta_k})
\end{gathered}
\end{equation}
since the second term in the first line is negative for $A^{-1}\succ 0$. By keeping second-order terms, the condition so that the preconditioner $A$ points towards a descent direction becomes \footnote{We write $B \prec C$ to denote the matrix $B-C$ being negative definite.}:
\begin{equation}
    A^{-1}HA^{-1}\prec 2A^{-1}.
\label{eq:second-order-condition}
\end{equation}
The above analysis can readily be formulated in the case where the preconditioner is \emph{a positive semidefinite matrix}, but the inverse is replaced by the Moore-Penrose inverse. In this case, we observe two things. The first is that moving towards a descent direction is feasible when the matrix $A^{-1}$ is small (with respect to a matrix norm), which can always be done by multiplying by a sufficiently small scalar $\eta$. However, choosing a matrix that is computationally expensive to calculate and then scaling it by a small factor $\eta$ (see QNG update in Eq. \eqref{eq:qng_update_rule} and Corollary \ref{corollary:qfim_resources}) may prohibit any advantage of using the preconditioner in the first place. On the other hand, condition \eqref{eq:second-order-condition} filters a large amount of positive definite matrices that allow for a descent direction, but testing the condition in an online setting is impractical since it requires the calculation of the Hessian at every iterate. 

We argue here that the preconditioner should carry information about the intrinsic geometry of the parameters (just like in QNG) but at the same time, it should be relatively fast to calculate. As we propose in Section \ref{sec:random_natural_gradient}, a clever way to feed information about changes happening in the quantum state in a positive semidefinite matrix is to use random measurements. This alternative allows for a fast calculation of a positive semidefinite matrix which is intrinsically meaningful and improves the convergence in a VQA framework.

\subsection{Classical Natural Gradient}

In section Sec. \ref{sec:local_optimization}, we introduced the idea of proximal updates where the parameters are updated in a way that considers a distance measure of the parameters. Similarly to QNG, one could prepare a quantum state, measure it (e.g. on the computational basis), and, with the measurement outcomes, approximate the probability distribution of different outcomes. In that case, we can choose the dissimilarity function to be the KL-divergence (see Sec. \ref{sec:distance_of_prob_distr}) between the probability distributions after measuring the quantum states at the computational basis.

However, nothing prevents us from using a different dissimilarity function by switching the measurement onto a different basis (possibly a random one). As such, if $\mathcal{M}$ is the measurement basis, then the proximal point method will become:
\begin{equation}
\begin{gathered}
    \boldsymbol{\theta_{k+1}} =\text{prox}_{\mathcal{L}, \lambda, \mathcal{M}}(\boldsymbol{\theta_k}) \\
    = \arg \min_{\boldsymbol{\theta}}\left[ \mathcal{L}(\boldsymbol{\theta}) + \lambda q_{\mathcal{M}}(\boldsymbol{\theta}, \boldsymbol{\theta_k})\right].
\end{gathered}
\end{equation}
In that case, the update rule will be reformulated to:
\begin{equation}
    \boldsymbol{\theta_{k+1}} = \boldsymbol{\theta_k} - \eta \mathcal{F}_C^{\mathcal{M}}(\boldsymbol{\theta_k})^{+}\grad \mathcal{L}(\boldsymbol{\theta_k})
\label{eq:classical_fisher_update}
\end{equation}
where $\mathcal{F}_C^{\mathcal{M}}$ is the classical Fisher information matrix constructed by performing measurements on the $\mathcal{M}$ basis. When the measurement basis is chosen to be the computational basis state, then the update rule \eqref{eq:classical_fisher_update} is referred to in the classical optimization literature as the \emph{Natural Gradient Descent} (NGD). However, in a quantum setting, the CFIM is basis-dependent, and as such, this property offers a significant computational advantage compared to QNG.

From the previous discussion, we can immediately introduce Corollary \ref{corollary:CFIM}, which provides a condition so that the update \eqref{eq:classical_fisher_update} results in a descent direction. The condition under which a CFIM (constructed by measurements on a basis $\mathcal{M})$ preconditions a gradient descent direction and results in a decrease on the loss function is as follows.

\begin{corollary}
        Consider the update rule \eqref{eq:classical_fisher_update} and let $\mathcal{F}_C^{\mathcal{M}}(\boldsymbol{\theta_k})$ be the CFIM constructed by performing measurements in the $\mathcal{M}$ basis. Then, the updated direction will result in a descent direction ($\mathcal{L}(\boldsymbol{\theta_{k+1})} \leq \mathcal{L}(\boldsymbol{\theta_{k})}$) as long as:
    \begin{equation}
        \eta \mathcal{F}_C^{\mathcal{M}}(\boldsymbol{\theta_k})^{+}H \mathcal{F}_C^{\mathcal{M}}(\boldsymbol{\theta_k})^{+} \preccurlyeq 2\mathcal{F}_C^{\mathcal{M}}(\boldsymbol{\theta_k})^{+}
    \label{eq:condition_of_CFIM}
    \end{equation}
for some $\eta$>0, where $H$ is the Hessian of the loss function $\mathcal{L}$ at the point $\boldsymbol{\theta_k}$.
\label{corollary:CFIM}
\end{corollary}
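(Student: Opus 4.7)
The plan is to mirror the preconditioning argument that is sketched in Section \ref{sec:local_optimization} for a generic positive definite preconditioner $A$ in equation \eqref{eq:second-order-condition}, and specialize it to the case $A^{+}=\mathcal{F}_C^{\mathcal{M}}(\boldsymbol{\theta_k})^{+}$, taking care that $\mathcal{F}_C^{\mathcal{M}}$ is in general only positive semidefinite so the Moore--Penrose pseudoinverse has to replace the ordinary inverse.

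First I would Taylor expand $\mathcal{L}$ around the current iterate $\boldsymbol{\theta_k}$ and truncate at second order,
\begin{equation*}
\mathcal{L}(\boldsymbol{\theta})=\mathcal{L}(\boldsymbol{\theta_k})+\grad\mathcal{L}(\boldsymbol{\theta_k})^{T}(\boldsymbol{\theta}-\boldsymbol{\theta_k})+\tfrac{1}{2}(\boldsymbol{\theta}-\boldsymbol{\theta_k})^{T}H(\boldsymbol{\theta}-\boldsymbol{\theta_k}),
\end{equation*}
and substitute the RNG step $\boldsymbol{\theta}-\boldsymbol{\theta_k}=-\eta\,\mathcal{F}_C^{\mathcal{M}}(\boldsymbol{\theta_k})^{+}\grad\mathcal{L}(\boldsymbol{\theta_k})$. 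Writing $g:=\grad\mathcal{L}(\boldsymbol{\theta_k})$ and $F^{+}:=\mathcal{F}_C^{\mathcal{M}}(\boldsymbol{\theta_k})^{+}$ for brevity, this yields
\begin{equation*}
\mathcal{L}(\boldsymbol{\theta_{k+1}})-\mathcal{L}(\boldsymbol{\theta_k})= -\eta\, g^{T}F^{+}g+\tfrac{\eta^{2}}{2}\, g^{T}F^{+}HF^{+}g.
\end{equation*}

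Next, I would require the right-hand side to be non-positive, which after rearrangement gives exactly the scalar inequality $\eta\,g^{T}F^{+}HF^{+}g\leq 2\,g^{T}F^{+}g$. To go from this scalar inequality to the matrix inequality \eqref{eq:condition_of_CFIM}, I would observe that \eqref{eq:condition_of_CFIM} is the stronger statement that the quadratic form $2F^{+}-\eta F^{+}HF^{+}$ is positive semidefinite; hence it implies the scalar condition for every choice of $g$, and in particular for the current gradient, so it is a sufficient condition for descent. This step is essentially identical to the derivation of \eqref{eq:second-order-condition} in the surrounding text, so it reduces to algebraic bookkeeping rather than new ideas.

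The only genuinely delicate point is the use of the pseudoinverse when $\mathcal{F}_C^{\mathcal{M}}$ is rank-deficient, which is exactly the regime discussed around Figure \ref{fig:ranks}. The clean way to handle it is to note that $F^{+}g$ lies in the range of $F^{+}$, so the update does not move the parameters in directions belonging to $\ker F=\ker F^{+}$; therefore the Taylor analysis above only probes $H$ and $F^{+}$ on $\operatorname{range}F^{+}$, and the matrix inequality \eqref{eq:condition_of_CFIM}, understood on that subspace, is both well defined and sufficient. The hard part, or rather the only part requiring care, is making this restriction explicit and confirming that the second-order Taylor remainder can be absorbed by taking $\eta$ sufficiently small, so that the ``$\approx$'' in the Taylor step becomes a genuine inequality in a neighborhood of $\boldsymbol{\theta_k}$; everything else is a direct application of the preconditioning computation already displayed in Section \ref{sec:local_optimization}.
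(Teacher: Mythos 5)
Your proposal is correct and follows essentially the same route as the paper: the corollary is presented there as an immediate specialization of the preconditioning computation in Section \ref{sec:local_optimization} (substitute the update into the second-order Taylor expansion, demand that the change in $\mathcal{L}$ be non-positive, and read off the matrix inequality \eqref{eq:second-order-condition} with $A^{-1}$ replaced by $\eta\,\mathcal{F}_C^{\mathcal{M}}(\boldsymbol{\theta_k})^{+}$). Your added remarks on restricting to $\operatorname{range}\mathcal{F}_C^{\mathcal{M}}(\boldsymbol{\theta_k})^{+}$ in the rank-deficient case and on absorbing the third-order Taylor remainder for small $\eta$ are refinements the paper glosses over, not departures from its argument.
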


Corollary \ref{corollary:CFIM} provides the condition so that a tuple $(\mathcal{F}_C^{\mathcal{M}}, \eta)$ will result in a decrease of the loss function. However, testing the condition in Eq. \ref{eq:condition_of_CFIM} cannot be feasibly implemented in a practical setting as it requires the calculation of the Hessian at each point $\boldsymbol{\theta_k}$. As such, we would have to rely on empirical choices for the choice of the hyperparameter $\eta$.

It is true that all directions that leave the quantum state invariant under translations of the parameters will also leave all probability distributions unaffected. The former directions correspond to the eigenvectors of the QFIM that belong in its \emph{null space} while the latter directions correspond to eigenvectors of different classical Fisher information matrices. We can thus show (see Proposition \ref{proposition:null_space}) that all zero eigenvalues of the QFIM correspond to zero eigenvalues of any CFIM (but not vice versa).

\begin{proposition}
\label{proposition:null_space}
The null space of the quantum Fisher information matrix $\mathcal{N}(\mathcal{F}_Q)$ is a subspace of the null space of the classical Fisher information matrix $\mathcal{N}(\mathcal{F}_C^{\mathcal{M}})$ over any measurement collection $\mathcal{M}$ at a fixed point $\boldsymbol{\theta}$
\begin{equation}
   \mathcal{N}(\mathcal{F}_Q) \subseteq \mathcal{N}(\mathcal{F}_C^{\mathcal{M}}).
\end{equation}
\end{proposition}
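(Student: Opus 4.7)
The plan is to turn the statement into the physical tautology that ``if moving in direction $v$ does not change the quantum state, then it cannot change any measurement distribution either,'' and then package this as a linear-algebra argument about positive semidefinite matrices.

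First I would pick an arbitrary $v\in\mathcal{N}(\mathcal{F}_Q)$, which by positive semidefiniteness is equivalent to $v^T\mathcal{F}_Q(\boldsymbol{\theta})v=0$. Using the expression for the QFIM entries in Eq.~\eqref{eq:qfim_elements}, I would introduce the ``covariant'' derivative $\ket{\tilde\partial_i\psi}:=\ket{\partial_i\psi}-\langle\psi|\partial_i\psi\rangle\ket{\psi}$ and observe that
\begin{equation}
v^T\mathcal{F}_Q(\boldsymbol{\theta})v \;=\; 4\,\bigl\lVert \textstyle\sum_i v_i \ket{\tilde\partial_i\psi}\bigr\rVert^2 .
\end{equation}
Hence $\sum_i v_i\ket{\partial_i\psi}=c\ket{\psi}$ for the scalar $c:=\sum_i v_i\langle\psi|\partial_i\psi\rangle$. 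Differentiating $\langle\psi|\psi\rangle=1$ forces $\langle\psi|\partial_i\psi\rangle$ to be purely imaginary, so $c$ is purely imaginary; this fact will be the only subtle ingredient.

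Next I would translate this into a statement about the probabilities $p_l(\boldsymbol{\theta})=\bra{\psi(\boldsymbol{\theta})}M_l\ket{\psi(\boldsymbol{\theta})}$, where $M_l:=V^\dagger(\boldsymbol{\phi})\Pi_l V(\boldsymbol{\phi})$ encodes an arbitrary measurement $\mathcal{M}$. Since $\partial_i p_l=2\,\mathrm{Re}\bra{\partial_i\psi}M_l\ket{\psi}$, the previous step gives
\begin{equation}
\sum_{i} v_i\,\partial_i p_l \;=\; 2\,\mathrm{Re}\bigl[c^{*}\bra{\psi}M_l\ket{\psi}\bigr] \;=\; 2\,\mathrm{Re}[c^{*}]\,p_l \;=\; 0
\end{equation}
for every outcome $l$, using that $c$ is imaginary and that $p_l$ is real.

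Finally, I would substitute this vanishing directional derivative into the CFIM formula to obtain
\begin{equation}
v^T\mathcal{F}_C^{\mathcal{M}}(\boldsymbol{\theta})v \;=\; \sum_l \frac{1}{p_l(\boldsymbol{\theta})}\Bigl(\sum_i v_i\,\partial_i p_l(\boldsymbol{\theta})\Bigr)^{\!2}=0 .
\end{equation}
Positive semidefiniteness of $\mathcal{F}_C^{\mathcal{M}}$ then upgrades this to $\mathcal{F}_C^{\mathcal{M}}(\boldsymbol{\theta})v=0$, i.e.\ $v\in\mathcal{N}(\mathcal{F}_C^{\mathcal{M}})$, completing the inclusion. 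The only place where care is needed is the imaginary-phase argument; the rest is algebra. The strictness of the inclusion (different $\mathcal{M}$ may have much larger kernels) is not asserted and needs no proof here.
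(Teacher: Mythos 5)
Your proof is correct, and while it rests on the same physical idea as the paper's --- a direction in $\mathcal{N}(\mathcal{F}_Q)$ leaves the state invariant and therefore cannot move any outcome probability --- your execution is genuinely different and, in fact, tighter. The paper diagonalizes $\mathcal{F}_Q$, observes that the infidelity vanishes to second order along a null eigenvector, and then \emph{asserts} that the probability distributions for any measurement are consequently unchanged; that last step is left as a physical statement rather than derived. You close exactly that gap: from $v^T\mathcal{F}_Q v=0$ you extract the exact identity $\sum_i v_i\ket{\partial_i\psi}=c\ket{\psi}$ via the norm-squared form of the Fubini--Study metric, use the unit-norm constraint to force $c$ to be purely imaginary, and then compute the directional derivative $\sum_i v_i\,\partial_i p_l = 2\,\mathrm{Re}[c^*]\,p_l = 0$ for every outcome of every measurement. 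This turns the paper's second-order infinitesimal argument into a first-order derivative computation that needs no appeal to ``the state does not change,'' and it also makes transparent why the global-phase freedom (the imaginary $c$) is precisely what the projection term in Eq.~\eqref{eq:qfim_elements} removes. The only cosmetic caveat is that the CFIM sum should be restricted to outcomes with $p_l>0$, but since your directional derivative vanishes for all $l$, this does not affect the conclusion.
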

\begin{proof}
    The infidelity between two parameterized quantum states $\ket{\psi(\boldsymbol{\theta})}$ and $\ket{\psi(\boldsymbol{\theta}+\boldsymbol{\epsilon})}$ is given by (see Eq. \eqref{eq:quantum_fisher}):
    \begin{equation*}
    d_f(\ket{\psi(\boldsymbol{\theta})},\ket{\psi(\boldsymbol{\theta}+ \boldsymbol{\epsilon})}) = \frac{1}{2} \boldsymbol{\epsilon}^T \mathcal{F}_Q \boldsymbol{\epsilon}
    \end{equation*}
    where $\mathcal{F}_Q$ is the QFIM at point $\boldsymbol{\theta}$. Consider now the eigenvalue decomposition of $\mathcal{F}_Q$:
    \begin{equation}
        \mathcal{F}_Q = U D U^T
    \end{equation}
    where $U$ is a unitary matrix with its columns being the normalized eigenvectors of $\mathcal{F}_Q$ and $D$ the diagonal matrix with the eigenvalues of $\mathcal{F}_Q$ as its entries. The distance between the two states can then be written as:
    \begin{equation}
    d_f(\ket{\psi(\boldsymbol{\theta})},\ket{\psi(\boldsymbol{\theta}+ \boldsymbol{\epsilon})}) = \frac{1}{2} \boldsymbol{\epsilon}^T U D U^T \boldsymbol{\epsilon}.
    \end{equation}
    Now, if we assume that $\boldsymbol{\epsilon} = d\alpha \boldsymbol{v_i}$ where $\boldsymbol{v_i}$ is an eigenvector of $\mathcal{F}_Q$ with eigenvalue $\lambda_i$, then the distance can be written as:
    \begin{equation}
    d_f(\ket{\psi(\boldsymbol{\theta})},\ket{\psi(\boldsymbol{\theta}+ \boldsymbol{\epsilon})}) = \frac{\lambda_i}{2} (d\alpha)^2.
    \end{equation}
    Thus, all infinitesimal changes $\delta \boldsymbol{u} \in \mathcal{N}(\mathcal{F}_Q)$ that belong in the null space of $\mathcal{F}_Q$ will not result in a change of the underlying quantum state. A direct consequence is that since the quantum states remain invariant under these translations, the probability distributions over any measurements will also remain unchanged. As such, since the probability distributions do not change for small displacements $\delta \boldsymbol{u}$, then $\delta \boldsymbol{u} \in \mathcal{N}(\mathcal{F}_C^{\mathcal{M}})$ and thus:
    \begin{equation*}
    \mathcal{N}(\mathcal{F}_Q) \subseteq \mathcal{N}(\mathcal{F}_C^{\mathcal{M}}).
\end{equation*}
\end{proof}

At this point, we would like to stress that the converse is not true. CFIM (obtained by a general measurement) may have zero eigenvalues that are not zero eigenvalues of the QFIM. This also implies that different measurements lead to CFIM that have different null spaces, with only guarantee that all of them contain the null space of the QFIM. It follows that some measurements lead to CFIMs that carry more information about changes happening in the quantum state and are closer to the QFIM than other measurements.

Finally, we can see that as in our case, a direction $\propto \mathcal{F}_C^{\mathcal{M}}(\boldsymbol{\theta_k})^{+} \grad \mathcal{L}(\boldsymbol{\theta})$ points towards the steepest descent direction (see section \ref{subsubsection:steepest_descent}) in the Riemannian space whose metric is the CFIM constructed by performing measurements in the $\mathcal{M}$ basis. But as we discuss in Section \ref{sec:optimal_measuremnt}, all $[\mathcal{F}_C^{\mathcal{M}}]$ are information matrices that carry partial local information of the quantum state with respect to the choice of measurements. In other words, all CFIMs can be seen as providing \emph{local approximations of the geometry of the underlying state-space} with the quality of the approximation determined by its distance from the QFIM.

\section{Optimal Measurement}
\label{sec:optimal_measuremnt}

Consider again two parameterized quantum states, $\ket{\psi(\boldsymbol{\theta})}$, $\ket{\psi(\boldsymbol{\theta} + \boldsymbol{\epsilon})}$ that differ by a small shift vector $\boldsymbol{\epsilon} \in \mathbb{R}^m$. Suppose we perform a measurement on a basis $\mathcal{M}$. As we have already seen, the distance of the corresponding probability distributions $p_{\mathcal{M}}(\boldsymbol{\theta})$, $p_{\mathcal{M}}(\boldsymbol{\theta}+\boldsymbol{\epsilon})$ can be written as:
\begin{equation}
    \mathrm{KL}(p_{\mathcal{M}}(\boldsymbol{\theta})|| p_{\mathcal{M}}(\boldsymbol{\theta}+\boldsymbol{\epsilon})) = \frac{1}{2}\boldsymbol{\epsilon}^T \mathcal{F}_C^{\mathcal{M}}(\boldsymbol{\theta}) \boldsymbol{\epsilon}.
\label{eq:distance_fisher}
\end{equation}
Our goal is to choose a measurement basis $\mathcal{M}$ that will extract the maximum information from the state $\ket{\psi(\boldsymbol{\theta})}$. In the context of this paper, \emph{maximum information} refers to a measurement that will approximate as much as possible what happens locally in the space of quantum states.\\

\begin{definition} \textbf{(Optimal Measurement)}
We define the optimal measurement $\mathcal{M^*}$ as the measurement that when applied on the states $\ket{\psi(\boldsymbol{\theta})}$ and $\ket{\psi(\boldsymbol{\theta} + \boldsymbol{\epsilon})}$ it maximizes the distance between the probability distributions $p_{\mathcal{M}}(\boldsymbol{\theta})$ and $p_{\mathcal{M}}(\boldsymbol{\theta} + \boldsymbol{\epsilon})$.
\begin{equation}
\mathcal{M^*} = \arg\max_{\mathcal{M}}  \mathrm{KL}(p_{\mathcal{M}}(\boldsymbol{\theta})|| p_{\mathcal{M}}(\boldsymbol{\theta}+\boldsymbol{\epsilon})).
\end{equation}
\end{definition}

As we discussed in Sec \ref{sec:distance_of_prob_distr} we can describe the possible measurements $\mathcal{M}(\boldsymbol{\phi})$ by first applying a unitary $V(\boldsymbol{\phi})$ on the parameterized state $\ket{\psi(\boldsymbol{\theta})}$ and then performing projective measurements on each qubit. If $\boldsymbol{\phi^*}$ are the angles that maximize the distance between probability distributions, then the following inequality holds
\begin{equation}
\begin{aligned}
    \mathrm{KL}(p_{\mathcal{M}(\boldsymbol{\phi^*})}(\boldsymbol{\theta})|| p_{\mathcal{M}(\boldsymbol{\phi^*})}(\boldsymbol{\theta}+\boldsymbol{\epsilon})) \\
    \leq \mathrm{KL}(p_{\mathcal{M^*}}(\boldsymbol{\theta})||p_{\mathcal{M^*}}(\boldsymbol{\theta}+\boldsymbol{\epsilon}))
\end{aligned}
\end{equation}
where the equality is true whenever there exists $\boldsymbol{\phi^*}$  such that $\mathcal{M}(\boldsymbol{\phi}^*) = \mathcal{M}^*$. We can use Eq. \eqref{eq:distance_fisher} and show that any CFIM can be upper bounded by the QFIM (see \cite{meyer2021fisher} for details) as :
\begin{equation}
    \mathcal{F}_C^{\mathcal{M}}(\boldsymbol{\theta}) \preccurlyeq \mathcal{F}_Q(\boldsymbol{\theta}) \; \forall \boldsymbol{\theta}\in \mathbb{R}^m.
\label{eq:fish_upper_bound}
\end{equation}
A natural question to ask is \emph{``What is the appropriate measurement basis so that the resulting CFIM is optimal, in the sense that the CFIM approaches the QFIM with the least amount of error."}. The answer to this question is discussed below.

Consider a parameterized quantum circuit that generates parameterized quantum states $\ket{\psi(\boldsymbol{\theta})}$. Consider also the set of measurements that are generated by applying a unitary $V(\boldsymbol{\phi})$ on the state $\ket{\psi(\boldsymbol{\theta})}$ and then measuring in the computational basis.

Our starting point is Eq. \eqref{eq:fish_upper_bound}. For every angle configuration $\boldsymbol{\phi}$, the matrix $(\mathcal{F}_C^{\mathcal{M}(\boldsymbol{\phi})}(\boldsymbol{\theta}) -\mathcal{F}_Q(\boldsymbol{\theta}))$ is negative semidefinite for every $\boldsymbol{\theta}$. As such, by taking the trace:
\begin{equation}
\begin{gathered}
    \tr \left( \mathcal{F}_C^{\mathcal{M}(\boldsymbol{\phi})}(\boldsymbol{\theta}) - \mathcal{F}_Q(\boldsymbol{\theta})\right) \leq 0 \\
    \tr \left(\mathcal{F}_C^{\mathcal{M}(\boldsymbol{\phi})}(\boldsymbol{\theta}) \right) \leq \tr \left(\mathcal{F}_Q(\boldsymbol{\theta})\right)
    \end{gathered}
    \end{equation}
As a result, the trace of QFIM provides an upper bound on the trace of every CFIM. We can thus conclude that the solution of the optimization problem:
    \begin{equation}
            \max_{\boldsymbol{\phi}} \tr (\mathcal{F}_C^{\mathcal{M(\boldsymbol{\phi})}}(\boldsymbol{\theta}))
\label{eq:optimal_measurement_optimization}
    \end{equation}
will result in the CFIM corresponding to the optimal measurement (or else the optimal approximation of QFIM).

At this point, it is important to stress that except for the one-parameter case, there does not always exist a measurement basis $\mathcal{M}$ so that the CFIM is equal to QFIM \cite{braunstein1994statistical, liu2020quantum}. Specifically, in \cite{pezze2017optimal}, the authors provide conditions under which there exists measurement so that the QFIM is saturated. However, if no such measurement exists, then the solution of Eq. \eqref{eq:optimal_measurement_optimization} will result in a CFIM that approximates QFIM with the least amount of error:
\begin{equation}
    \min_{\mathcal{M}}\norm{\mathcal{F}_C^{\mathcal{M}}(\boldsymbol{\theta}) - \mathcal{F}_Q(\boldsymbol{\theta})}
\end{equation}
where $\norm{\cdot}$ is any matrix norm. Moreover, as we discussed in Section \ref{sec:local_optimization}, in a classical optimization scheme, achieving the optimal measurement may not have an immediate effect on the speed of convergence. Ideally, we would want to increase the number of directions in the probability distribution space so that the optimizer can have more directions to move.  As we will discuss in Section \ref{sec:random_natural_gradient}, this is accomplished when the measurement basis is chosen at random.

The immediate advantage of identifying the optimal measurement is that provided that the resulting CFIM coincides with QFIM,  the quantum natural gradient \cite{stokes2020quantum} can be implemented using $\mathcal{O}(m)$ quantum states/resources instead of $\mathcal{O}(m^2)$. This results in a quadratic advantage in quantum resource requirement compared to previous methods. However, when the measurement operators are parameterized by a unitary $V(\boldsymbol{\phi})$, the optimization problem becomes non-convex. On top of that, the optimal measurement is $\boldsymbol{\theta}$-dependent, and as such, after every optimization iteration, the optimal measurement must be re-evaluated. 

\textbf{Remark.} As we observed in our experiments, increasing the expressivity of the random parameterized measurements results in CFIMs that 1) tends towards $\frac{1}{2} \mathcal{F}_Q$, i.e. the error $\norm{\mathcal{F}_C^{\mathcal{M}} - \frac{1}{2}\mathcal{F}_Q}$ is reduced and 2) the variance of the error (of any random CFIM) goes to zero.

Consider, for example, the parameterized quantum circuit in the left-hand side of Figure \ref{fig:quantum_circuits} for a system of 8 qubits and 3 layers. As it is illustrated in Figure \ref{fig:histogram}, it is clear that a Pauli measurement cannot encapsulate the intrinsic changes happening in the quantum state. However, as we introduce random measurements, the random CFIM starts to approximate the QFIM, with the approximation becoming better when more expressive ansatz families are used. For the random measurements, we used the same type of circuit but with different Pauli rotations.

\begin{figure}
\begin{tikzpicture}
\node (img)  {\includegraphics[scale=0.6]{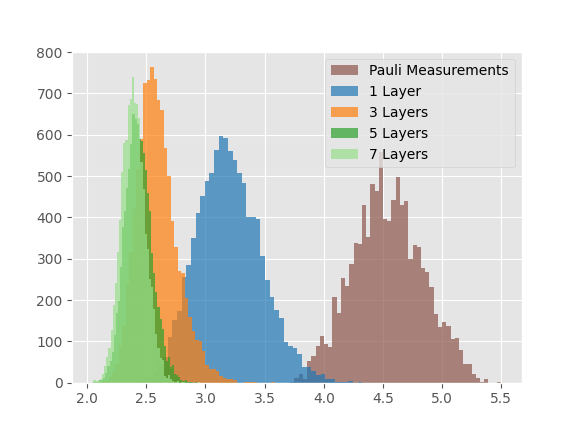}};
\node[below=of img, node distance=0cm, xshift=0.2cm, yshift=1.5cm] {\scriptsize $\norm{\mathcal{F}_C^{\mathcal{M}} - \frac{1}{2}\mathcal{F}_Q}$};
\node[left=of img, node distance=0cm, rotate=90, xshift=0.4cm, yshift=-1.5cm] {\scriptsize Samples};
\end{tikzpicture}
\caption{Distance of random CFIMs from QFIM. As the number of layers increases, sampling a random measurement tends to have a small distance from the QFIM and, as such, carries more information. A total number of 10000 CFIMs were calculated for each choice of measurement basis.}
\label{fig:histogram}
\end{figure}

\section{Stochastic-Coordinate Quantum Natural Gradient}
\label{sec:stochastic_coordinate}

In this section, we will provide our second main result which is based on a lower-rank approximation of the QFIM. During the past few years, researchers have proposed a number of ways to approximate the QFIM by reducing the quantum resources in order to make the QNG more applicable in real-world settings. As we previously mentioned, Stokes \emph{et al} \cite{stokes2020quantum} proposed that instead of calculating the full QFIM, one could calculate a block-diagonal approximation of the QFIM. However, such an approximation may not be valid when off-block diagonal terms are highly correlated.

On the other hand, in \cite{Gacon2021simultaneous}, the authors suggested that one could apply the 2-SPSA algorithm (which is used to calculate the Hessian of a loss function) to approximate the QFIM. The authors suggested that this strategy is efficient as it requires a constant number of quantum states independent of the number of parameters. However, this approximation requires a larger number of shots as the number of parameters increases (or a smaller step size during the optimization) in order to achieve the same accuracy.

In this section, we provide a new approximation that is inspired by coordinate descent algorithms \cite{wright2015coordinate}. In these algorithms, the user determines a coordinate \cite{nesterov2012efficiency}, or a block of coordinates \cite{tseng2001convergence} that will update on each iteration and keeps all other directions fixed. At this point, we need to take a step back and discuss a redundancy measure that was introduced in \cite{haug2021capacity}. Consider a PQC $C$ with $m$ parameters and let its parameter dimension $D_C$. As introduced in \cite{haug2021capacity}, the parameter dimension $D_C$ quantifies the number of independent parameters that the PQC can express in the space of quantum states. Let also $G_C(\boldsymbol{\theta})$ be the rank of QFIM at point $\boldsymbol{\theta}$. The authors numerically verified that for PQCs in which their parameterized gates follow a $(0, 2\pi)$ gate periodicity:
\begin{equation}
    G_C(\boldsymbol{\theta}) \approx D_C
\end{equation}
for randomly chosen $\boldsymbol{\theta}$. As such, for a given PQC, by measuring the rank of QFIM at random points, we can calculate the redundancy of the parameters
\begin{equation}
    R = \frac{m - G_C(\boldsymbol{\theta})}{m}.
\end{equation}
In Figure \ref{fig:ranks}, we illustrate the ranks of both the QFIM and CFIMs compared to the total number of parameters. We can visualize that the redundancy measure is large and that only a fraction of the total parameters contribute to changing the quantum state in an independent way.

Consider the QFIM at a given configuration $\boldsymbol{\theta}$, with rank $m-k$ where $k$ is the dimension of its kernel and $m$ is the total number of parameters. In the ideal case where the eigenvectors and the eigenvalues are known, identifying the parameters that can change the quantum state can be performed using the following procedure.

Let the kernel of $\mathcal{F}_Q$ spanned by the eigenvectors $\{\boldsymbol{v_1}, \ldots, \boldsymbol{v_k}\}$ with $k<m$. If we project the parameters onto that subspace, we can identify which parameters matter most. For example, if a given parameter $\theta_i$ has the largest projection onto that subspace, then we can deduce that varying this parameter will result in minor (if not negligible) changes in the underlying quantum state. Moreover, if the projection onto the space orthogonal to the kernel is zero, then varying this parameter will have no effect on the quantum state. Mathematically, as also noted by \cite{haug2021capacity}, one would have to calculate the quantity:
\begin{equation}
    g_i = \sum_{j=1}^{k} |v_j^i|^2
\end{equation}
for every $i\in [m]$ and the largest $g_i$ would correspond to parameters whose variation will not lead to any change of the state. However, the problem is that such a procedure is inefficient in practice, as different parameters may matter most in different parameter settings, and the calculation of the full QFIM is needed. Instead, we propose the following solution to this problem, which is computationally cheaper but may not always find all the parameters that result in an independent change.

\begin{figure}
\begin{tikzpicture}
\node (img)  {\includegraphics[scale=0.5]{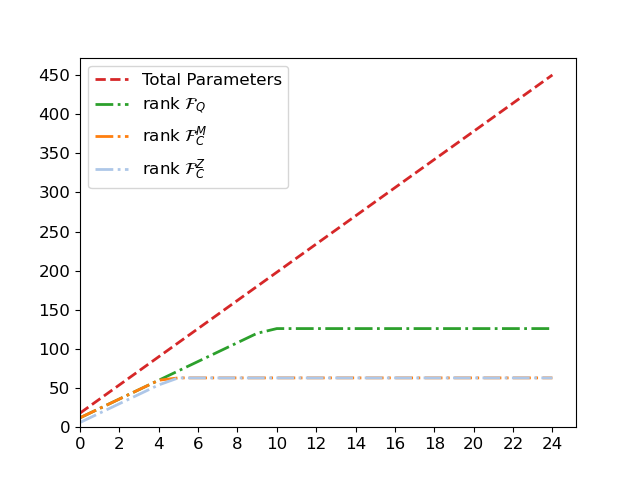}};
\node[below=of img, node distance=0cm, yshift=1.5cm] {\scriptsize Layers};
\end{tikzpicture}
\caption{Ranks of QFIM $\mathcal{F}_Q$, CFIM with $Z$-basis measurements $\mathcal{F}_C^Z$, and CFIM with measurements on a random basis $\mathcal{F}_C^\mathcal{M}$ compared to the total number of parameters of the parameterized quantum circuit on the left of Figure \ref{fig:quantum_circuits}.}
\label{fig:ranks}
\end{figure}

Now, consider a subset $L\subseteq [m]$ (of cardinality $|L| = l\leq m$) of the total number of parameters and let $\mathcal{F}_{RQ}$ be the \emph{reduced} QFIM with elements defined as:
\begin{equation}
    [\mathcal{F}_{RQ}]_{ij} = \begin{cases}
        [\mathcal{F}_Q]_{ij} \; & \text{if $i,j \in L$}\\
        0 \; &\text{otherwise}
    \end{cases}
\label{eq:reduced_QFIM}
\end{equation}
where $[\mathcal{F}_Q]_{ij}$ are the elements of the QFIM defined in Eq. \eqref{eq:qfim_elements}. We can immediately see that the first advantage of this approximation is that the cost of calculating the reduced QFIM immediately drops down to $\mathcal{O}_Q(l^2)$ quantum state preparations (but the same classical memory resources). The second advantage is that increasing the size of the subset $L$, i.e. considering more parameters, improves the accuracy of the approximation, but increases the cost, with the method essentially becoming QNG when $L = [m]$. The physical intuition behind this approximation is that the reduced QFIM carries information about how the quantum state changes (with respect to a distance measure) if we vary only a portion of the total parameters. However, the question that naturally arises is how one can pick any such coordinate subset.

A straightforward way to choose the coordinate subset is to sample uniformly a subset (of user-specified cardinality) of the total number of parameters. The probability that this subset includes all independent parameters is given by Lemma \ref{lemma:probability}.

\begin{lemma}
    Consider a parameterized quantum circuit $C$ composed of $m$ parameters. Consider also the unknown subset $L \subseteq [m]$ (of cardinality $ |L| = l$) of parameters whose variation results in an independent change of the underlying quantum state at point $\boldsymbol{\theta}$. Let also $S_k \subseteq [m]$ be the set of (uniformly) randomly sampled parameters of cardinality $|S_k| = l+k \leq m$. Then, the probability of $L\subseteq S_k$ is
    \begin{equation}
        \mathbb{P}[L\subseteq S_k] = \frac{(l+k)!}{k!l!} \frac{l!(m-l)!}{m!}.
    \end{equation}
\label{lemma:probability}
\end{lemma}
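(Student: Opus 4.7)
The claim is a standard combinatorial identity, so the plan is to reduce it to counting subsets. The setup is that $S_k$ is drawn uniformly at random from all subsets of $[m]$ of cardinality $l+k$, and $L$ is a fixed subset of $[m]$ with $|L|=l$. Under the uniform distribution on $\binom{[m]}{l+k}$, the probability of the event $\{L \subseteq S_k\}$ is simply the ratio of favorable configurations to total configurations.

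First I would enumerate the total number of possible $S_k$, namely $\binom{m}{l+k}$. Then I would count the favorable configurations: a subset $S_k$ of size $l+k$ contains $L$ iff its remaining $k$ elements are chosen from the complement $[m]\setminus L$, which has size $m-l$. Hence the number of favorable subsets is $\binom{m-l}{k}$. This yields
\begin{equation*}
\mathbb{P}[L \subseteq S_k] \;=\; \frac{\binom{m-l}{k}}{\binom{m}{l+k}}.
\end{equation*}

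The final step is to rewrite this ratio in the form stated in the lemma. Expanding the binomial coefficients and regrouping gives
\begin{equation*}
\frac{\binom{m-l}{k}}{\binom{m}{l+k}} \;=\; \frac{(m-l)!/(k!(m-l-k)!)}{m!/((l+k)!(m-l-k)!)} \;=\; \frac{(l+k)!\,(m-l)!}{k!\,m!},
\end{equation*}
which, after inserting the factor $l!/l! = 1$, matches exactly $\tfrac{(l+k)!}{k!\,l!}\cdot \tfrac{l!(m-l)!}{m!}$.

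There is essentially no obstacle: once the uniform sampling model is made explicit and one recognizes that the $l+k$ slots of $S_k$ must include all of $L$, the result is a one-line counting identity. The only point that deserves a brief sentence in the writeup is the implicit modelling choice, namely that "uniformly sampled parameters of cardinality $l+k$" should be read as a uniform draw from $\binom{[m]}{l+k}$ rather than from ordered tuples; equivalently, one could run the argument sequentially by drawing one element at a time without replacement and multiplying the resulting probabilities, which yields the same answer.
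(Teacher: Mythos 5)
Your proof is correct, and it takes a different (and cleaner) route than the paper's. The paper argues sequentially: it computes $\Pr[L=S_0]$ as a product of conditional probabilities $\frac{l}{m}\cdot\frac{l-1}{m-1}\cdots\frac{1}{m-l+1}$ for drawing the $l$ target parameters one at a time without replacement, then handles $k=1$ by summing over the possible positions of the single ``extra'' draw to get $(l+1)\Pr[L=S_0]$, and extrapolates the pattern to general $k$ with an ``in the exact same manner'' step that is not spelled out as a full induction. You instead count unordered subsets directly: $\binom{m-l}{k}$ favourable subsets out of $\binom{m}{l+k}$ total, and a one-line algebraic simplification gives $\frac{(l+k)!\,(m-l)!}{k!\,m!}$, which matches the stated formula. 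Your version buys rigor and brevity -- the general-$k$ case is handled exactly rather than by pattern-matching -- while the paper's sequential version makes the multinomial coefficient $\frac{(l+k)!}{k!\,l!}$ appear naturally as the count of interleavings of target and non-target draws, which is perhaps why the authors wrote the answer in that factored form. Your remark about the modelling choice (uniform over $\binom{[m]}{l+k}$ versus ordered sampling without replacement) is the right caveat to include, and the two models agree, so nothing is lost.
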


\begin{proof} The proof can be found in Appendix \ref{appendix:proof_lemma}
\end{proof}

One can use this approximation and construct an approximation to the QNG. At each iteration, the user samples (uniformly at random) a different subset $L_i \subset[m]$ of the total coordinates and calculates the reduced QFIM given by Eq. \eqref{eq:reduced_QFIM}. Then, the parameters are updated according to the rule:
\begin{equation}
    \boldsymbol{\theta}_{k+1} = \boldsymbol{\theta}_k - \eta [\mathcal{F}_{RQ}(\boldsymbol{\theta}_k)]^{+} \grad{\mathcal{L}}(\boldsymbol{\theta}_k).
\label{update_rule_rqng}
\end{equation}

\begin{algorithm}
\caption{Stochastic-Coordinate Quantum Natural Gradient}
\label{alg:stochastic_coordinate}
\SetKwInOut{Input}{Input}
\Input{Problem Hamiltonian $\mathcal{H}$\;
Ansatz family $\ket{\psi(\boldsymbol{\theta})}=U(\boldsymbol{\theta})\ket{0}$\;
Total iterations $K$\;
Loss function $\mathcal{L}(\boldsymbol{\theta})$\;
Initial parameters $\boldsymbol{\theta} = \boldsymbol{\theta_0}$\;
Learning rate $\eta$\;}
\For{$k=1,2,\ldots, K$}{
Shuffle random subset of coordinates $L_k \subset [m]$\;
Calculate derivatives $\frac{\partial \mathcal{L}(\boldsymbol{\theta})}{\partial \theta_i} \; \forall \; i \in L_k$\;
Calculate the reduced QFIM $\mathcal{F}_{RQ}(\boldsymbol{\theta})$ \;
Update $\boldsymbol{\theta}$ as $\boldsymbol{\theta} = \boldsymbol{\theta} - \eta [\mathcal{F}_{RQ}(\boldsymbol{\theta})]^{+} \grad{\mathcal{L}}(\boldsymbol{\theta})$\;
}
\Return $\boldsymbol{\theta}$
\end{algorithm}

\section{Method Evaluation}
\label{sec:method_evaluation}
In the first part of this section, we discuss the mathematical problems used for our experiments. In the second part, we discuss the appropriate figures of merit used to benchmark our proposed algorithms (see Random Natural Gradient in Sec. \ref{sec:random_natural_gradient} and Stochastic-Coordinate Quantum Natural Gradient in Sec. \ref{sec:stochastic_coordinate}). The choice of quantum circuits used in our experiments is given in Appendix \ref{appendix:parameterized_circuits}.

\subsection{Mathematical Problems}

We consider three problems: two classical (MaxCut and Number Partitioning) and one quantum (Heisenberg model).

\subsubsection{MaxCut}

The first mathematical problem that we tackled is a classical combinatorial optimization problem called \emph{MaxCut.} In this task, the user is presented with a non-directed $n$-vertex graph $G(V, E)$ where $V$ is the set of vertices, and $E$ is the set of edges. Each edge connecting vertex $i$ with vertex $j$ is weighted by a non-zero weight $w_{ij}$.

A \emph{cut} is defined as a partition of the set of vertices into two disjoint subsets $V_1, V_2$ (with $V_1\cup V_2 = V$ and $V_1 \cap V_2 = \emptyset$). Equivalently, we label $0$ each vertex that belongs in the set $V_1$ and $1$ otherwise. The target of the MaxCut problem is to maximize the cost function:
 \begin{equation}
     C(\boldsymbol{x}) = \sum_{i,j = 1}^n w_{ij}x_i\left(1-x_j\right).
 \end{equation}
 where $x_i\in \{0,1\}$ are binary variables. Intuitively, this corresponds to finding a partition of the vertices into two disjoint sets that ``cuts'' the maximum number of edges. By transforming the binary variables $x_i$ to spin variables $z_i$ according to $x_i = \frac{1-z_i}{2}$, the cost function is mapped into a quantum spin-configuration problem,
 \begin{equation}
     C(\boldsymbol{z}) = \sum_{\left<i,j\right>\in E} \frac{w_{ij}}{2}\left(1-z_iz_j\right).
 \end{equation}
Maximizing the cost function above corresponds to finding the ground state of the Hamiltonian
\begin{equation}
	H_{\textrm{MC}} = -\sum_{\left<i,j\right>\in E}\frac{w_{ij}}{2}\left( 1- \sigma_{i}^z \sigma_j^z \right).
 \label{eq:maxcut_hamiltonian}
\end{equation}

MaxCut has been extensively used in the VQA literature \cite{farhi2014quantumapproximateoptimizationalgorithm, wang2018quantum,  crooks2018performance, bravyi2020obstacles} as a benchmarking problem to test the performance of certain near-term algorithms and techniques, and for that reason, we will also examine the performance of our algorithms in this setting.

\subsubsection{Number Partitioning}

The second problem that we chose to tackle is the \emph{Number Partitioning} problem, a classical problem that has been previously used as a benchmark in the VQE literature \cite{nannicini2019performance, barkoutsos2020improving, kolotouros2022evolving}. The Number-Partitioning problem can be mapped to the \emph{Sherrington-Kirkpatrick model}, which has been previously analysed in \cite{farhi2022quantum}. The problem is stated as follows.

We are given a set of $N$ integers $\{n_1, n_2, \ldots, n_N\}$ and we are asked to decide whether there exists a partition of the set into two disjoint subsets $S, \bar{S}$ so that the sums of the elements on each subset are equal. The above problem can be cast as an optimization problem with a cost function:
\begin{equation}
    C(\boldsymbol{x}) = \left(\sum_{i=1}^N (2x_i - 1)n_i\right)^2.
\end{equation}
The binary string $\boldsymbol{x}=x_1x_2\dots x_n$ corresponds to one configuration where a number $n_i$ is placed in the $S$ set ($x_i=0$) or in the $\bar{S}$ set ($x_i = 1)$. The Number Partitioning problem can then easily be formulated as a quantum spin problem (see \cite{lucas2014ising}) interacting with Hamiltonian:
\begin{equation}
    H_{\textrm{NP}} = \sum_{i\neq j}(n_i n_j)\sigma_i \sigma_j + \sum_{i=1}^N n_i^2.
\end{equation}

\subsubsection{Heisenberg Model}

The second mathematical problem that we tackle is to find the ground state of a quantum Hamiltonian. Specifically, our goal is to prepare the ground state of a \emph{Heisenberg Model}, which corresponds to a Hamiltonian having extra off-diagonal terms. The following Hamiltonian describes the 1D XXX Heisenberg model:
\begin{equation}
    H_{XXX} = J\sum_{i=1}^{n-1}\left(\sigma_i^x \sigma_{i+1}^x + \sigma^y_i \sigma^y_{i+1} + \sigma^z_i \sigma^z_{i+1}\right) + h \sum_{i=1}^n \sigma_x^i
\end{equation}

The Heisenberg model has also been extensively used in the VQA literature for benchmarking purposes. In \cite{gacon2023variational} the authors simulated \emph{quantum imaginary time evolution} to prepare ground states of the Heisenberg model while in \cite{wierichs2020avoiding} the authors investigated how QNG avoids local minima in a variation of the Heisenberg model, called Transverse-Field Ising Chain model. 

\subsection{Evaluation Metrics}

In order to fairly evaluate our proposed algorithms, we choose to benchmark based on two different metrics. In a hybrid quantum-classical setting, one would have to carefully evaluate both the classical and the quantum resources needed to execute an optimization algorithm.

As we compare classical optimization algorithms, a fair choice of metric is to count the number of optimization iterations (or else how many times we have to update the parameters of the quantum circuit) until convergence.  

The second choice of metric is how many (different) quantum states we have to prepare until we converge. For example, as we already discussed, although the QNG may converge faster in terms of optimization iterations, it actually requires $\mathcal{O}_Q(m^2)$ quantum states at each iteration. So, a careful analysis may prohibit any practical advantage, as the resources may be larger than performing naive gradient descent.

On top of that, it is beneficial to benchmark our method to QNG and GD on a large number of instances in order to evaluate the overall performance. For this reason, we choose two additional metrics. The first (which quantifies the quality of the output solution) is the average probability of sampling the optimal solution (i.e. the overlap with the ground state) of the output state (when the optimization is terminated). The second (which quantifies the speed of methods) is the average relative error (over all different instances) per iteration. It quantifies how much, on average, at a given iteration, the current state differs from the optimal state.

\section{Results}
\label{sec:results}

In this section, we will illustrate how the two proposed methods perform in different optimization settings. In the first part, we will investigate the performance of the \emph{Random Natural Gradient} (see Sec \ref{sec:random_natural_gradient}) while in the second part, the \emph{Stochastic-Coordinate Quantum Natural Gradient} (see Sec. \ref{sec:stochastic_coordinate}).

\subsection{Technical Details}

In this manuscript, all simulations were performed using Qiskit's exact \emph{Statevector simulator}, which allows noiseless executions of the quantum circuits. For all classical optimization algorithms used, we calculated the gradients using the parameter shift rules \cite{Schuld_2019} and used the same learning rate $\eta\in [0.001, 0.1]$ for all algorithms. For the classical combinatorial optimization problems, we used the quantum circuit on the right of Figure \ref{fig:quantum_circuits} with nearest-neighbours interactions and 4 layers while for the Transverse-Field Ising problem, the quantum circuit on the left of Figure \ref{fig:quantum_circuits} with an all-to-all connectivity and 3 layers. Moreover, for the random measurements, we used the same type of circuits with different Pauli rotations and with smaller depths than the circuits that generated $\ket{\psi(\boldsymbol{\theta})}$. Finally, in order to calculate the Moore-Penrose inverses for the update steps, we set a threshold of $10^{-4}$ as a cutoff for the singular values in order to prohibit very large steps.

\subsection{Random Natural Gradient}

\subsubsection{MaxCut}

You can illustrate the overall performance of RNG compared to GD and QNG in Table \ref{maxcut_table} and Figure \ref{fig:relative_error_comparison}. The three methods were compared on random weighted 3-regular graphs. This class of graphs have only two optimal solutions, where each one can be acquired from the other by flipping all qubits (due to the $\mathbb{Z}_2$ symmetry of the MaxCut problem).

In Table \ref{maxcut_table}, we can illustrate the probability of sampling the optimal solution when the three different optimization methods were used with the same initial angles and the same step size ($\eta=0.05$). We can see that the Random Natural Gradient and Quantum Natural Gradient perform almost similarly. The Gradient Descent method, however, always performs worse than the former information-theoretic methods. Overall, we expect RNG and QNG to perform similarly and, in the limit of infinitely many iterations, to converge to the same point (provided that the approximation of RNG to QNG is sufficient).

On the other hand, in Figure 3, we plot the average relative error of the output solution for the three methods (over the 30 different instances on 12 qubits). As can be clearly seen, the QNG and RNG return solutions that, on average, are always closer to the optimal solution. This highlights the fact (in agreement with \cite{wierichs2020avoiding}) that information-theoretic methods perform significantly better than methods that do not consider the information of the underlying state space. For additional experiments that illustrate the actual resources needed for RNG compared to QNG, we point the reader to Appendix \ref{sec:additional_experiments}.

\subsubsection{Number Partitioning}

As in MaxCut, we illustrate the overall performance of RNG compared to GD and QNG in Table \ref{number_partitioning_table}. For our experiments, we chose to sample integers from the $[0, 25]$ set and set the step-size of $\eta=0.001$ (a smaller step-size is needed to guarantee convergence as the cost values are much larger than MaxCut). Similar to MaxCut, Number Partitioning also has a $\mathbb{Z}_2$ symmetry. As it can be clearly visualized, the superiority of information methods is also present in the Number Partitioning problem. Clearly, the two information-theoretic optimization methods are able to return high-quality outputs, achieving significantly larger overlap with optimal solutions (on an average of 60 instances).

\subsubsection{Heisenberg Model}

You can visualize the performance of the Random Natural Gradient on a Heisenberg-model instance of 10 qubits (with couplings $J=h=1$) in Figure \ref{fig:random_natural_gradient}. For this instance, we used the hardware-efficient ansatz seen on the left side of Figure \ref{fig:quantum_circuits}, with a nearest-neighbour connection.

On the left side of Figure \ref{fig:random_natural_gradient}, we illustrate the number of optimization iterations (or else how many times we update the parameters) until convergence (we stop the optimization after 500 iterations). We see that the RNG is able to reach the region of the local minimum much faster (in terms of optimization iterations) compared to the QNG. On the other hand, the GD optimizer, as it doesn't carry any information about the underlying Riemannian space, gets stuck at a local minimum and performs significantly worse than QNG and RNG.

However, the biggest advantage is illustrated on the right-hand side of Figure \ref{fig:random_natural_gradient}. There, we can visualize the actual (quantum) resources needed until convergence. It is clear that the RNG offers a significant advantage in the number of quantum calls, reducing the overall overhead in current quantum devices (requiring almost ten times less quantum state preparations than the QNG until convergence). 

\begin{table}
\begin{center}
\begin{adjustbox}{width=\columnwidth,center}
\begin{tabular}{||c||c|c||}
    \hline
     \textbf{MaxCut} &\multicolumn{2}{c||}{Optimal Solution Overlap ($\%$)}\\
     \hline
     & 12 Qubits & 14 Qubits\\
     \hline
     Gradient Descent & 37.3 & 24.99 \\
     \hline
     Random Natural Gradient & 41.89 & 32.02\\
     \hline
     Quantum Natural Gradient & 42.4 & 29.99\\
     \hline
     
\end{tabular}
\end{adjustbox}
\caption{Probability of sampling the optimal solution for the MaxCut problem for 60 different instances (30 for 12 qubits and 30 for 14 qubits). All instances correspond to random 3-regular weighted graphs. Both QNG and RNG outperform GD.}
\label{maxcut_table}
\end{center}
\end{table}

\begin{table}
\begin{center}
\begin{adjustbox}{width=\columnwidth,center}
\begin{tabular}{||c||c|c||}
    \hline
     \textbf{Number Partitioning} &\multicolumn{2}{c||}{Optimal Solution Overlap ($\%$)}\\
     \hline
     & 12 Qubits & 14 Qubits\\
     \hline
     Gradient Descent & 20.83 & 17.62 \\
     \hline
     Random Natural Gradient & 27.54 & 25.54\\
     \hline
     Quantum Natural Gradient & 28.17 & 26.31\\
     \hline
\end{tabular}
\end{adjustbox}
\caption{Probability of sampling the optimal solution for the Number Partitioning problem for 60 different instances (30 for 12 qubits and 30 for 14 qubits). All instances correspond to a set of integers drawn from the $[1,25]$ interval. Both information-theoretic methods outperform GD.}
\label{number_partitioning_table}
\end{center}
\end{table}

\begin{figure}[!ht]
\begin{tikzpicture}
\node (img1)  {\includegraphics[scale=0.54]{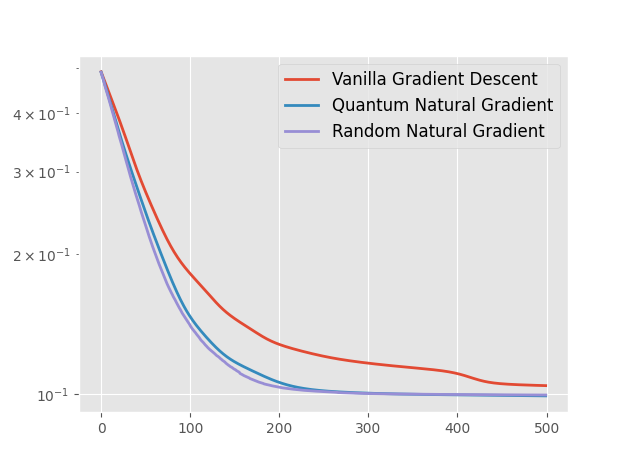}};
\node[below=of img1, node distance=0cm, yshift=1.3cm] (xlabel1) {\scriptsize Optimization Iterations};
\node[left=of img1, xshift=0.5cm, node distance=0cm, rotate=90, anchor=center,yshift=-0.5cm] {\scriptsize $(\mathcal{L}(\boldsymbol{\theta}) - E_{\text{opt}})/(E_{\text{opt}})$};
\end{tikzpicture}
\caption{Relative error for Gradient Descent, Random Natural Gradient and Quantum Natural Gradient for 30 12-qubit random weighted 3-regular graphs (in logarithmic scale).}
\label{fig:relative_error_comparison}
\end{figure}

\begin{figure*}
\begin{tikzpicture}
\node (img1)  {\includegraphics[scale=0.54]{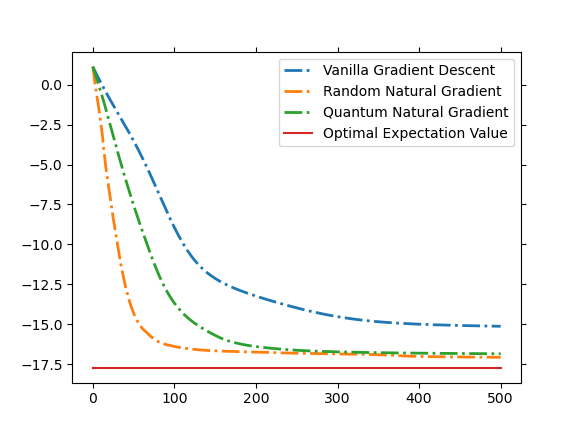}};
\node[below=of img1, node distance=0cm, yshift=1.3cm] (xlabel1) {\scriptsize Optimization Iterations};
\node[left=of img1, xshift=0.5cm, node distance=0cm, rotate=90, anchor=center,yshift=-0.7cm] {\scriptsize $\mathcal{L}(\boldsymbol{\theta})$};
\node[right=of img1, xshift=-1.6cm] (img2)  {\includegraphics[scale=0.54]{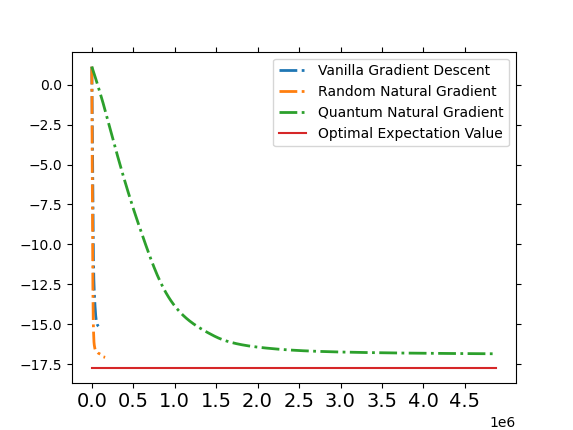}};
\node[below=of img2, node distance=0cm, yshift=1.1cm] (xlabel2) {\scriptsize Quantum State Preparations};
\node[left=of img2, xshift=0.5cm, node distance=0cm, rotate=90, anchor=center, yshift=-0.7cm] {\scriptsize $\mathcal{L}(\boldsymbol{\theta})$};
\end{tikzpicture}
\caption{Performance of the Random Natural Gradient optimizer on a Heisenberg model of 10 qubits compared to the Quantum Natural Gradient and Gradient Descent on both the optimization iterations (left figure) and on quantum resources (right figure). The RNG and GD methods require less quantum resources to converge (compared to QNG), but, as seen in both figures, the GD method converges to a bad quality minimum.}
\label{fig:random_natural_gradient}
\end{figure*}

\subsection{Stochastic-Coordinate Quantum Natural Gradient}

In figure \ref{fig:scqng}, we can visualize the performance of the Stochastic-Coordinate Quantum Natural Gradient on a Heisenberg-model instance of 10 qubits (again with couplings $J=h=1$ but with different random initial angles).

An important hyperparameter in the SC-QNG optimizer is the cardinality of the random subset $S_k$ that we uniformly sample at each iteration. In this work, we make the naive choice that the user samples $m/2$ parameters at random in each iteration. As we can see in Figure \ref{fig:scqng}, both QNG and SC-QNG require the same number of optimization iterations to converge. However, the latter results in a significant reduction in the actual quantum resources needed, requiring only a fraction of the quantum states that we need to prepare for QNG. 

You can find additional experiments, with a comparison between the two proposed methods on MaxCut instances in Appendix \ref{sec:additional_experiments}.

\begin{figure*}[t]
\begin{tikzpicture}
\node (img1)  {\includegraphics[scale=0.55]{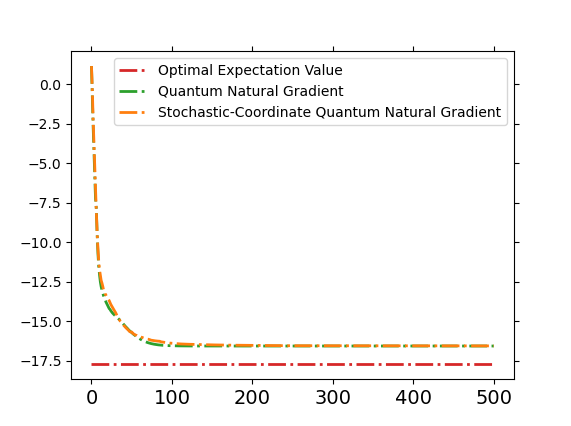}};
\node[below=of img1, node distance=0cm, yshift=1.2cm] (xlabel1) {\scriptsize Optimization Iterations};
\node[left=of img1, xshift=0.6cm, node distance=0cm, rotate=90, anchor=center,yshift=-0.7cm] {\scriptsize $\mathcal{L}(\boldsymbol{\theta})$};
\node[right=of img1, xshift=-1.6cm] (img2)  {\includegraphics[scale=0.55]{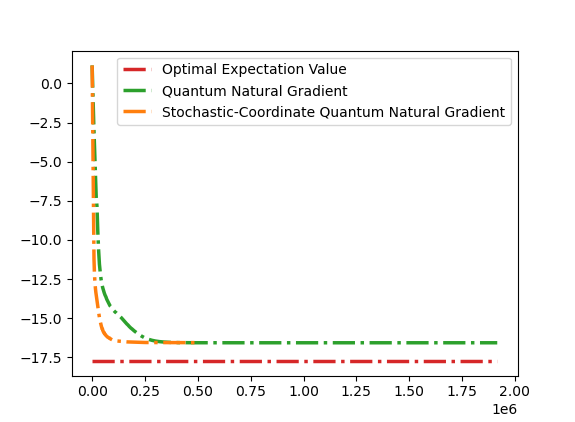}};
\node[below=of img2, node distance=0cm, yshift=1.2cm] (xlabel2) {\scriptsize Quantum State Preparations};
\node[left=of img2, xshift=0.6cm, node distance=0cm, rotate=90, anchor=center, yshift=-0.7cm] {\scriptsize $\mathcal{L}(\boldsymbol{\theta})$};
\end{tikzpicture}
\caption{Comparison of SC-QNG (with sampling half of the total parameters at each iteration) with QNG both in terms of optimization iterations (left) and quantum calls (right).}
\label{fig:scqng}
\end{figure*}

\section{Discussion}
\label{sec:discussion}

In this paper, we introduced two novel classical optimization algorithms that move Variational Quantum Algorithms (VQAs) a step closer to the practical regime. The way that we achieved this is by reducing the overall computational overhead in the quantum processor while at the same time exploiting quantum properties. 

The starting point of the first optimization algorithm is the Quantum Natural Gradient (QNG), introduced by \emph{Stokes et al.} \cite{stokes2020quantum}. In this algorithm, the parameters of a quantum circuit are updated in such a way that they still consider the changes happening locally in the quantum state. However, such an optimization procedure is impractical as the number of quantum calls grows as $\mathcal{O}(m^2)$, where $m$ is the number of parameters of the parameterized quantum circuit. To address this problem, we proposed \emph{Random Natural Gradient}, a classical optimization algorithm that requires quadratically less quantum calls than the QNG. Instead of performing the tedious calculation of the quantum Fisher information matrix (QFIM) at every iterate, we apply a random measurement and calculate the random classical Fisher information matrix (CFIM). The latter matrix provides an approximation to the changes happening in the quantum states, and as we find numerically, the approximation becomes better as we increase the expressiveness of the random unitary we apply for the random measurement.

The second optimization algorithm was inspired both from QNG and classical coordinate descent methods. In these methods, the user selects (possibly at random) a subset of the total parameters to optimize at each iteration reducing the overall computational overhead. As pointed out by \cite{haug2021capacity}, in Variational Quantum Circuits, not all parameters can lead to independent changes in the underlying quantum state. For that reason, we argue that one can construct a \emph{reduced} QFIM by examining how a subset of the total parameters changes a quantum state (with respect to a distance measure). This new matrix requires a fraction of the total overhead required by QFIM. We then argue that a user can randomly select and optimize a (different) subset of the total parameters at each iteration and we call this optimization \emph{Stochastic-Coordinate Quantum Natural Gradient} (SC-QNG).

Our results open up new directions of research in the optimization part of VQAs. As far as the RNG is concerned, one may wonder whether we can use fixed Pauli measurements instead of random measurements as defined earlier, reducing the extra depth in the parameterized quantum circuit. However, as we illustrate in Appendix \ref{appendix:cfim_pauli_measurements}, such an approach does not work. The reason is that both the error $\norm{\mathcal{F}_C^{\text{Pauli}} - \mathcal{F}_Q}$ and the number of singularities in the parameterized space is large for Pauli measurements (see also Figure \ref{fig:histogram}). As a result, the classical optimization algorithm will make very large steps, being unable to converge. The appropriate strategy is to use random measurements so that their corresponding Fisher information matrix is still close to the quantum Fisher information matrix. This opens up a new research direction, as the way of choosing the appropriate basis may also be viewed as a quantum machine learning problem. 

Another interesting research direction is to use additional figures of merit to quantify the practicality of RNG. In practical scenarios, one would have to use several circuit repetitions (shots) to accurately estimate the random classical Fisher information matrix and the gradient of the loss function. As explained in \cite{van2021measurement}, in certain scenarios, the calculation of the QFIM introduces negligible overhead compared to the gradient of the loss. As such, it would be fruitful to understand how much additional overhead is required to accurately achieve a good approximation of a random classical Fisher information matrix.

Additionally, a very interesting direction is to understand how information-theoretic methods perform in the presence of noise. It is clear that as noise starts to kick in, the information matrices lose the information they carry as errors start to accumulate in the matrix elements. For this reason, it would be fruitful to investigate how noise-tolerant these methods are and how well they perform against other (non-information-theoretic) optimization methods.

Finally, we would like to address the fact that as the number of qubits or the number of parameters increases, the error in the approximation of the QFIM by a random CFIM may also increase. A possible solution to circumvent this approach may be to use adaptive step sizes so that the condition that the optimizer moves into a descent direction still holds. This is also true for other classical optimization algorithms, such as the SPSA algorithm \cite{Gacon2021simultaneous} in which as the number of parameters increases, smaller step-sizes are necessary to guarantee convergence \cite{abbas2023quantum}.

On the other hand, for the SC-QNG, an immediate next step is to examine more sophisticated (but computationally inexpensive) ways to choose the randomly selected subset. For example, there may be quantities that can be calculated fast and can give information about which parameters matter most in a parameterized quantum circuit. Finally, the last thing that is essential in the VQA literature is to have a theoretical motivation about which optimization algorithms are suited for certain problems, as there may be quantum problems where the use of QNG (or approximations of it) is necessary to converge to a non-local minimum.

\section{Code Availability}
You can find a Python (Qiskit) implementation of both \emph{Random Natural Gradient} and \emph{Stochastic-Coordinate Quantum Natural Gradient} in \href{https://github.com/ioankolot/Random_Natural_Gradient}{Github}.

\section{Acknowledgements}
PW acknowledges support by EPSRC grants EP/T001062/1, EP/X026167/1 and EP/T026715/1, STFC grant ST/W006537/1 and Edinburgh-Rice Strategic Collaboration Awards. 

\bibliographystyle{quantum}
\bibliography{References}

\begin{thebibliography}{10}

\bibitem{cerezo2021variational}
M.~Cerezo, Andrew Arrasmith, Ryan Babbush, Simon~C. Benjamin, Suguru Endo, Keisuke Fujii, Jarrod~R. McClean, Kosuke Mitarai, Xiao Yuan, Lukasz Cincio, and Patrick~J. Coles.
\newblock ``Variational quantum algorithms''.
\newblock \href{https://dx.doi.org/10.1038/s42254-021-00348-9}{Nature Reviews Physics {\bf 3}, 625–644}~(2021).

\bibitem{RevModPhys.94.015004}
Kishor Bharti, Alba Cervera-Lierta, Thi~Ha Kyaw, Tobias Haug, Sumner Alperin-Lea, Abhinav Anand, Matthias Degroote, Hermanni Heimonen, Jakob~S. Kottmann, Tim Menke, Wai-Keong Mok, Sukin Sim, Leong-Chuan Kwek, and Al\'an Aspuru-Guzik.
\newblock ``Noisy intermediate-scale quantum algorithms''.
\newblock \href{https://dx.doi.org/10.1103/RevModPhys.94.015004}{Rev. Mod. Phys. {\bf 94}, 015004}~(2022).

\bibitem{farhi2014quantumapproximateoptimizationalgorithm}
Edward Farhi, Jeffrey Goldstone, and Sam Gutmann.
\newblock ``A quantum approximate optimization algorithm''~(2014).
\newblock  \href{http://arxiv.org/abs/1411.4028}{arXiv:1411.4028}.

\bibitem{kandala2017hardware}
Abhinav Kandala, Antonio Mezzacapo, Kristan Temme, Maika Takita, Markus Brink, Jerry~M. Chow, and Jay~M. Gambetta.
\newblock ``Hardware-efficient variational quantum eigensolver for small molecules and quantum magnets''.
\newblock \href{https://dx.doi.org/10.1038/nature23879}{Nature {\bf 549}, 242–246}~(2017).

\bibitem{PhysRevA.103.012405}
Andrea Mari, Thomas~R. Bromley, and Nathan Killoran.
\newblock ``Estimating the gradient and higher-order derivatives on quantum hardware''.
\newblock \href{https://dx.doi.org/10.1103/PhysRevA.103.012405}{Phys. Rev. A {\bf 103}, 012405}~(2021).

\bibitem{Schuld_2019}
Maria Schuld, Ville Bergholm, Christian Gogolin, Josh Izaac, and Nathan Killoran.
\newblock ``Evaluating analytic gradients on quantum hardware''.
\newblock \href{https://dx.doi.org/10.1103/physreva.99.032331}{Physical Review A{\bf 99}}~(2019).

\bibitem{wierichs2022general}
David Wierichs, Josh Izaac, Cody Wang, and Cedric Yen-Yu Lin.
\newblock ``General parameter-shift rules for quantum gradients''.
\newblock \href{https://dx.doi.org/10.22331/q-2022-03-30-677}{Quantum {\bf 6}, 677}~(2022).

\bibitem{anand2021natural}
Abhinav Anand, Matthias Degroote, and Alán Aspuru-Guzik.
\newblock ``Natural evolutionary strategies for variational quantum computation''.
\newblock \href{https://dx.doi.org/10.1088/2632-2153/abf3ac}{Machine Learning: Science and Technology {\bf 2}, 045012}~(2021).

\bibitem{zhao2020natural}
Tianchen Zhao, Giuseppe Carleo, James Stokes, and Shravan Veerapaneni.
\newblock ``Natural evolution strategies and variational monte carlo''.
\newblock \href{https://dx.doi.org/10.1088/2632-2153/abcb50}{Machine Learning: Science and Technology {\bf 2}, 02LT01}~(2020).

\bibitem{koczor2022quantum}
Bálint Koczor and Simon~C. Benjamin.
\newblock ``Quantum analytic descent''.
\newblock \href{https://dx.doi.org/10.1103/physrevresearch.4.023017}{Physical Review Research{\bf 4}}~(2022).

\bibitem{abbas2023quantum}
Amira Abbas, Robbie King, Hsin-Yuan Huang, William~J. Huggins, Ramis Movassagh, Dar Gilboa, and Jarrod~R. McClean.
\newblock ``On quantum backpropagation, information reuse, and cheating measurement collapse''~(2023).
\newblock  \href{http://arxiv.org/abs/2305.13362}{arXiv:2305.13362}.

\bibitem{bowles2023backpropagation}
Joseph Bowles, David Wierichs, and Chae-Yeun Park.
\newblock ``Backpropagation scaling in parameterised quantum circuits''~(2024).
\newblock  \href{http://arxiv.org/abs/2306.14962}{arXiv:2306.14962}.

\bibitem{you2021exponentially}
Xuchen You and Xiaodi Wu.
\newblock ``Exponentially many local minima in quantum neural networks''.
\newblock In International Conference on Machine Learning.
\newblock Pages 12144--12155.
\newblock PMLR~(2021).
\newblock  url:~\url{https://arxiv.org/abs/2110.02479}.

\bibitem{anschuetz2021critical}
Eric~R. Anschuetz.
\newblock ``Critical points in quantum generative models''~(2023).
\newblock  \href{http://arxiv.org/abs/2109.06957}{arXiv:2109.06957}.

\bibitem{stokes2020quantum}
James Stokes, Josh Izaac, Nathan Killoran, and Giuseppe Carleo.
\newblock ``Quantum natural gradient''.
\newblock \href{https://dx.doi.org/10.22331/q-2020-05-25-269}{Quantum {\bf 4}, 269}~(2020).

\bibitem{mcardle2019variational}
Sam McArdle, Tyson Jones, Suguru Endo, Ying Li, Simon~C. Benjamin, and Xiao Yuan.
\newblock ``Variational ansatz-based quantum simulation of imaginary time evolution''.
\newblock \href{https://dx.doi.org/10.1038/s41534-019-0187-2}{npj Quantum Information{\bf 5}}~(2019).

\bibitem{gacon2023variational}
Julien Gacon, Jannes Nys, Riccardo Rossi, Stefan Woerner, and Giuseppe Carleo.
\newblock ``Variational quantum time evolution without the quantum geometric tensor''.
\newblock \href{https://dx.doi.org/10.1103/physrevresearch.6.013143}{Physical Review Research{\bf 6}}~(2024).

\bibitem{amari1998natural}
Shun-ichi Amari.
\newblock ``Natural gradient works efficiently in learning''.
\newblock \href{https://dx.doi.org/10.1162/089976698300017746}{Neural Computation {\bf 10}, 251--276}~(1998).

\bibitem{meyer2021fisher}
Johannes~Jakob Meyer.
\newblock ``Fisher information in noisy intermediate-scale quantum applications''.
\newblock \href{https://dx.doi.org/10.22331/q-2021-09-09-539}{Quantum {\bf 5}, 539}~(2021).

\bibitem{haug2021capacity}
Tobias Haug, Kishor Bharti, and M.S. Kim.
\newblock ``Capacity and quantum geometry of parametrized quantum circuits''.
\newblock \href{https://dx.doi.org/10.1103/prxquantum.2.040309}{PRX Quantum{\bf 2}}~(2021).

\bibitem{haug2023generalization}
Tobias Haug and M.S. Kim.
\newblock ``Generalization of quantum machine learning models using quantum fisher information metric''.
\newblock \href{https://dx.doi.org/10.1103/physrevlett.133.050603}{Physical Review Letters{\bf 133}}~(2024).

\bibitem{haug2022natural}
Tobias Haug and M.~S. Kim.
\newblock ``Natural parametrized quantum circuit''.
\newblock \href{https://dx.doi.org/10.1103/physreva.106.052611}{Physical Review A{\bf 106}}~(2022).

\bibitem{koczor2022qngnonunitary}
Bálint Koczor and Simon~C. Benjamin.
\newblock ``Quantum natural gradient generalized to noisy and nonunitary circuits''.
\newblock \href{https://dx.doi.org/10.1103/physreva.106.062416}{Physical Review A{\bf 106}}~(2022).

\bibitem{huang2020predicting}
Hsin-Yuan Huang, Richard Kueng, and John Preskill.
\newblock ``Predicting many properties of a quantum system from very few measurements''.
\newblock \href{https://dx.doi.org/10.1038/s41567-020-0932-7}{Nature Physics {\bf 16}, 1050–1057}~(2020).

\bibitem{sack2022avoiding}
Stefan~H. Sack, Raimel~A. Medina, Alexios~A. Michailidis, Richard Kueng, and Maksym Serbyn.
\newblock ``Avoiding barren plateaus using classical shadows''.
\newblock \href{https://dx.doi.org/10.1103/prxquantum.3.020365}{PRX Quantum{\bf 3}}~(2022).

\bibitem{elben2023randomized}
Andreas Elben, Steven~T. Flammia, Hsin-Yuan Huang, Richard Kueng, John Preskill, Benoît Vermersch, and Peter Zoller.
\newblock ``The randomized measurement toolbox''.
\newblock \href{https://dx.doi.org/10.1038/s42254-022-00535-2}{Nature Reviews Physics {\bf 5}, 9–24}~(2022).

\bibitem{lowe2023fast}
Angus Lowe, Matija Medvidovi{\'{c}}, Anthony Hayes, Lee~J. O'Riordan, Thomas~R. Bromley, Juan~Miguel Arrazola, and Nathan Killoran.
\newblock ``Fast quantum circuit cutting with randomized measurements''.
\newblock \href{https://dx.doi.org/10.22331/q-2023-03-02-934}{{Quantum} {\bf 7}, 934}~(2023).

\bibitem{yu2021experimental}
Min Yu, Dongxiao Li, Jingcheng Wang, Yaoming Chu, Pengcheng Yang, Musang Gong, Nathan Goldman, and Jianming Cai.
\newblock ``Experimental estimation of the quantum fisher information from randomized measurements''.
\newblock \href{https://dx.doi.org/10.1103/physrevresearch.3.043122}{Physical Review Research{\bf 3}}~(2021).

\bibitem{wright2015coordinate}
Stephen~J. Wright.
\newblock ``Coordinate descent algorithms''~(2015).
\newblock  \href{http://arxiv.org/abs/1502.04759}{arXiv:1502.04759}.

\bibitem{nesterov2012efficiency}
Yu. Nesterov.
\newblock ``Efficiency of coordinate descent methods on huge-scale optimization problems''.
\newblock \href{https://dx.doi.org/10.1137/100802001}{SIAM Journal on Optimization {\bf 22}, 341--362}~(2012).
\newblock  \href{http://arxiv.org/abs/https://doi.org/10.1137/100802001}{arXiv:https://doi.org/10.1137/100802001}.

\bibitem{tseng2001convergence}
Paul Tseng.
\newblock ``Convergence of a block coordinate descent method for nondifferentiable minimization''.
\newblock \href{https://dx.doi.org/https://doi.org/10.1023/A:1017501703105}{Journal of optimization theory and applications {\bf 109}, 475--494}~(2001).

\bibitem{barkoutsos2020improving}
Panagiotis~Kl. Barkoutsos, Giacomo Nannicini, Anton Robert, Ivano Tavernelli, and Stefan Woerner.
\newblock ``Improving variational quantum optimization using cvar''.
\newblock \href{https://dx.doi.org/10.22331/q-2020-04-20-256}{Quantum {\bf 4}, 256}~(2020).

\bibitem{kolotouros2022evolving}
Ioannis Kolotouros and Petros Wallden.
\newblock ``Evolving objective function for improved variational quantum optimization''.
\newblock \href{https://dx.doi.org/10.1103/PhysRevResearch.4.023225}{Phys. Rev. Res. {\bf 4}, 023225}~(2022).

\bibitem{li2020quantum}
Li~Li, Minjie Fan, Marc Coram, Patrick Riley, and Stefan Leichenauer.
\newblock ``Quantum optimization with a novel gibbs objective function and ansatz architecture search''.
\newblock \href{https://dx.doi.org/10.1103/physrevresearch.2.023074}{Physical Review Research{\bf 2}}~(2020).

\bibitem{anschuetz2022quantum}
Eric~R. Anschuetz and Bobak~T. Kiani.
\newblock ``Quantum variational algorithms are swamped with traps''.
\newblock \href{https://dx.doi.org/10.1038/s41467-022-35364-5}{Nature Communications{\bf 13}}~(2022).

\bibitem{morozova1991markov}
Elena~Alexandra Morozova and Nikolai~Nikolaevich Chentsov.
\newblock ``Markov invariant geometry on manifolds of states''.
\newblock \href{https://dx.doi.org/https://doi.org/10.1007/BF01095975}{Journal of Soviet Mathematics {\bf 56}, 2648--2669}~(1991).

\bibitem{romero2018strategies}
Jonathan Romero, Ryan Babbush, Jarrod~R. McClean, Cornelius Hempel, Peter Love, and Alán Aspuru-Guzik.
\newblock ``Strategies for quantum computing molecular energies using the unitary coupled cluster ansatz''~(2018).
\newblock  \href{http://arxiv.org/abs/1701.02691}{arXiv:1701.02691}.

\bibitem{petz1996monotone}
Dénes Petz.
\newblock ``Monotone metrics on matrix spaces''.
\newblock \href{https://dx.doi.org/https://doi.org/10.1016/0024-3795(94)00211-8}{Linear Algebra and its Applications {\bf 244}, 81--96}~(1996).

\bibitem{banchi2021measuring}
Leonardo Banchi and Gavin~E. Crooks.
\newblock ``Measuring analytic gradients of general quantum evolution with the stochastic parameter shift rule''.
\newblock \href{https://dx.doi.org/10.22331/q-2021-01-25-386}{Quantum {\bf 5}, 386}~(2021).

\bibitem{boyd2004convex}
Stephen Boyd and Lieven Vandenberghe.
\newblock ``Convex optimization''.
\newblock Cambridge University Press. ~(2004).
\newblock  url:~\url{https://www.cambridge.org/highereducation/books/convex-optimization/17D2FAA54F641A2F62C7CCD01DFA97C4}.

\bibitem{wierichs2020avoiding}
David Wierichs, Christian Gogolin, and Michael Kastoryano.
\newblock ``Avoiding local minima in variational quantum eigensolvers with the natural gradient optimizer''.
\newblock \href{https://dx.doi.org/10.1103/physrevresearch.2.043246}{Physical Review Research{\bf 2}}~(2020).

\bibitem{garcia2023effects}
Diego García-Martín, Martín Larocca, and M.~Cerezo.
\newblock ``Effects of noise on the overparametrization of quantum neural networks''.
\newblock \href{https://dx.doi.org/10.1103/physrevresearch.6.013295}{Physical Review Research{\bf 6}}~(2024).

\bibitem{yamamoto2019natural}
Naoki Yamamoto.
\newblock ``On the natural gradient for variational quantum eigensolver''~(2019).
\newblock  \href{http://arxiv.org/abs/1909.05074}{arXiv:1909.05074}.

\bibitem{parikh2014proximal}
Neal Parikh and Stephen~P. Boyd.
\newblock ``Proximal algorithms''.
\newblock Found. Trends Optim. {\bf 1}, 127--239~(2013).
\newblock  url:~\url{https://api.semanticscholar.org/CorpusID:51791656}.

\bibitem{braunstein1994statistical}
Samuel~L. Braunstein and Carlton~M. Caves.
\newblock ``Statistical distance and the geometry of quantum states''.
\newblock \href{https://dx.doi.org/10.1103/PhysRevLett.72.3439}{Phys. Rev. Lett. {\bf 72}, 3439--3443}~(1994).

\bibitem{liu2020quantum}
Jing Liu, Haidong Yuan, Xiao-Ming Lu, and Xiaoguang Wang.
\newblock ``Quantum fisher information matrix and multiparameter estimation''.
\newblock \href{https://dx.doi.org/10.1088/1751-8121/ab5d4d}{Journal of Physics A: Mathematical and Theoretical {\bf 53}, 023001}~(2019).

\bibitem{pezze2017optimal}
Luca Pezz\`e, Mario~A. Ciampini, Nicol\`o Spagnolo, Peter~C. Humphreys, Animesh Datta, Ian~A. Walmsley, Marco Barbieri, Fabio Sciarrino, and Augusto Smerzi.
\newblock ``Optimal measurements for simultaneous quantum estimation of multiple phases''.
\newblock \href{https://dx.doi.org/10.1103/PhysRevLett.119.130504}{Phys. Rev. Lett. {\bf 119}, 130504}~(2017).

\bibitem{Gacon2021simultaneous}
Julien Gacon, Christa Zoufal, Giuseppe Carleo, and Stefan Woerner.
\newblock ``Simultaneous {P}erturbation {S}tochastic {A}pproximation of the {Q}uantum {F}isher {I}nformation''.
\newblock \href{https://dx.doi.org/10.22331/q-2021-10-20-567}{{Quantum} {\bf 5}, 567}~(2021).

\bibitem{wang2018quantum}
Zhihui Wang, Stuart Hadfield, Zhang Jiang, and Eleanor~G. Rieffel.
\newblock ``Quantum approximate optimization algorithm for maxcut: A fermionic view''.
\newblock \href{https://dx.doi.org/10.1103/physreva.97.022304}{Physical Review A{\bf 97}}~(2018).

\bibitem{crooks2018performance}
Gavin~E. Crooks.
\newblock ``Performance of the quantum approximate optimization algorithm on the maximum cut problem''~(2018).
\newblock  \href{http://arxiv.org/abs/1811.08419}{arXiv:1811.08419}.

\bibitem{bravyi2020obstacles}
Sergey Bravyi, Alexander Kliesch, Robert Koenig, and Eugene Tang.
\newblock ``Obstacles to variational quantum optimization from symmetry protection''.
\newblock \href{https://dx.doi.org/10.1103/physrevlett.125.260505}{Physical Review Letters{\bf 125}}~(2020).

\bibitem{nannicini2019performance}
Giacomo Nannicini.
\newblock ``Performance of hybrid quantum-classical variational heuristics for combinatorial optimization''.
\newblock \href{https://dx.doi.org/10.1103/physreve.99.013304}{Physical Review E{\bf 99}}~(2019).

\bibitem{farhi2022quantum}
Edward Farhi, Jeffrey Goldstone, Sam Gutmann, and Leo Zhou.
\newblock ``The quantum approximate optimization algorithm and the sherrington-kirkpatrick model at infinite size''.
\newblock \href{https://dx.doi.org/10.22331/q-2022-07-07-759}{Quantum {\bf 6}, 759}~(2022).

\bibitem{lucas2014ising}
Andrew Lucas.
\newblock ``Ising formulations of many np problems''.
\newblock \href{https://dx.doi.org/10.3389/fphy.2014.00005}{Frontiers in Physics{\bf 2}}~(2014).

\bibitem{van2021measurement}
Barnaby van Straaten and Bálint Koczor.
\newblock ``Measurement cost of metric-aware variational quantum algorithms''.
\newblock \href{https://dx.doi.org/10.1103/prxquantum.2.030324}{PRX Quantum{\bf 2}}~(2021).

\end{thebibliography}

\onecolumn\newpage
\appendix

\section{Derivation of Classical Fisher Information Matrix}
\label{sec:classical_fisher_derivation}

\noindent We start with the KL-divergence for two probability distributions $p(\boldsymbol{\theta})$, $p(\boldsymbol{\theta} + \boldsymbol{\epsilon})$:
\begin{equation}
\begin{gathered}
        \mathrm{KL}(p(\boldsymbol{\theta})||p(\boldsymbol{\theta} + \boldsymbol{\epsilon})) = \sum_{l} p_l(\boldsymbol{\theta}) \log \frac{p_l(\boldsymbol{\theta})}{p_l(\boldsymbol{\theta} + \boldsymbol{\epsilon})} =\\
        \sum_{l} p_l(\boldsymbol{\theta}) \log p_l(\boldsymbol{\theta}) - \sum_{l} p_l(\boldsymbol{\theta}) \log p_l(\boldsymbol{\theta}+\boldsymbol{\epsilon}). 
\end{gathered}
\end{equation}
The elements of the CFIM are defined as the second-order derivatives of the KL-divergence. Specifically, an element $[\mathcal{F}_C]_{ij}$ is given by:
\begin{equation}
\begin{gathered}
    [\mathcal{F}_C]_{ij} = -\frac{\partial^2}{\partial \epsilon_i \partial \epsilon_j} \sum_{l} p_l(\boldsymbol{\theta}) \log p_l(\boldsymbol{\theta}+\boldsymbol{\epsilon})\Bigg{|}_{\boldsymbol{\epsilon}=0} = - \sum_{l} p_l(\boldsymbol{\theta}) \frac{\partial^2}{\partial \epsilon_i \partial \epsilon_j} \log p_l(\boldsymbol{\theta}+\boldsymbol{\epsilon})\Bigg{|}_{\boldsymbol{\epsilon}=0} =\\
    -\mathbb{E}\Bigg\{\frac{\partial^2}{\partial \epsilon_i \partial \epsilon_j} \log p_l(\boldsymbol{\theta}+\boldsymbol{\epsilon})\Bigg{|}_{\boldsymbol{\epsilon}=0}\Bigg\} =  -\mathbb{E}\Bigg\{\frac{\partial^2}{\partial \theta_i \partial \theta_j} \log p_l(\boldsymbol{\theta})\Bigg\}
\end{gathered}
\label{eq:fisher_element_derive}
\end{equation}
since:
\begin{equation}
    \begin{gathered}
        -\frac{\partial^2}{\partial \theta_i \partial \theta_j} \log p_l(\boldsymbol{\theta}) = - \frac{1}{p_l(\boldsymbol{\theta})}\frac{\partial^2 p_l(\boldsymbol{\theta})}{\partial \theta_i \partial \theta_j} + \frac{1}{p_l^2(\boldsymbol{\theta})}\frac{\partial p_l(\boldsymbol{\theta})}{\partial \theta_i}\frac{\partial p_l(\boldsymbol{\theta})}{\partial \theta_j}.
    \end{gathered}
    \label{eq:derivative_fish}
\end{equation}
Substituting Eq. \eqref{eq:derivative_fish} in Eq. \eqref{eq:fisher_element_derive} we get:
\begin{equation}
\begin{gathered}
    [\mathcal{F}_C]_{ij} = \mathbb{E}\Bigg\{ -\frac{1}{p_l(\boldsymbol{\theta})}\frac{\partial^2 p_l(\boldsymbol{\theta})}{\partial \theta_i \partial \theta_j} + \frac{1}{p_l^2(\boldsymbol{\theta})}\frac{\partial p_l(\boldsymbol{\theta})}{\partial \theta_i}\frac{\partial p_l(\boldsymbol{\theta})}{\partial \theta_j} \Bigg\} = \\
    \sum_l p_l(\boldsymbol{\theta}) \left[\frac{-1}{p_l(\boldsymbol{\theta})}\frac{\partial^2 p_l(\boldsymbol{\theta})}{\partial \theta_i \partial \theta_j} + \frac{1}{p_l^2(\boldsymbol{\theta})}\frac{\partial p_l(\boldsymbol{\theta})}{\partial \theta_i}\frac{\partial p_l(\boldsymbol{\theta})}{\partial \theta_j} \right] \\ \implies
    [\mathcal{F}_C]_{ij} = \sum_l \frac{1}{p_l(\boldsymbol{\theta})}\frac{\partial p_l(\boldsymbol{\theta})}{\partial \theta_i}\frac{\partial p_l(\boldsymbol{\theta})}{\partial \theta_j} 
\end{gathered}
\end{equation}
where we used the fact that
\begin{equation}
    \sum_l \frac{\partial^2 p_l(\boldsymbol{\theta})}{\partial \theta_i \partial \theta_j} =  \frac{\partial^2}{\partial \theta_i \partial \theta_j} \sum_l p_l(\boldsymbol{\theta}) = 0.
\end{equation}
As you can see, although the Fisher information is the matrix corresponding to second-order derivatives, after our analysis, it can be written as a product of first-order derivatives. This has an immediate advantage in the number of resources needed. It requires only the calculation of the first-order derivatives in the quantum computer. Then, the classical computer post-processes the $m$ first-order derivatives and stores the classical fisher information using $\mathcal{O}_C(m^2)$ classical memory.

\section{Parameterized Ansatz Families}
\label{appendix:parameterized_circuits}
The hardware-efficient ansatz families used in this manuscript can be visualized below. On the left side of Figure \ref{fig:quantum_circuits}, the circuits consist of a series of $R_y$-$R_x$ gates, followed by a series of controlled-NOT operations applied either in a nearest-neighbour fashion or with all-to-all connectivity. The total number of parameters for a $p$-layer $n$-qubit PQC is $m = 2np$. Different choices for the single-qubit gates and the entangling gates were tested without affecting the validity of our results.

For the second choice of the ansatz family, we applied the circuit visualized on the right side of Figure \ref{fig:quantum_circuits}. In this architecture, each qubit is first rotated by an $R_y$ gate. Then, for each layer, a series of controlled-$Z$ operations are applied in an all-to-all or nearest-neighbour fashion, and then a final series of rotation $R_y$ gates are applied in all qubits. The total number of parameters for a $p$-layer $n$-qubit PQC is $m = (p+1)n$.
\begin{figure}[!ht]
    \begin{center}
    \begin{tikzpicture}
        \node (img1) [scale=0.7]{
\begin{quantikz}
    \lstick{$\ket{0}$} &\gate{R_y(\theta_1)}\gategroup[6, steps=5, style={dashed, rounded corners,fill=blue!20, inner xsep=2pt}, background]{{\sc single layer}} &\gate{R_z(\theta_2)} &\ctrl{1} &\qw &\gate{X} & \meter{}\\
    \lstick{$\ket{0}$} &\gate{R_y(\theta_3)} &\gate{R_z(\theta_4)} & \gate{X} &\ctrl{1} & \qw & \meter{}\\
    \lstick{$\ket{0}$} &\gate{R_y(\theta_5)} &\gate{R_z(\theta_6)} &\ctrl{1} &\gate{X} &\qw & \meter{}\\
    \lstick{$\ket{0}$} &\gate{R_y(\theta_7)} &\gate{R_z(\theta_8)} &\gate{X} &\ctrl{1} & \qw & \meter{}\\
    \lstick{$\ket{0}$} &\gate{R_y(\theta_9)} &\gate{R_z(\theta_{10})} &\ctrl{1} &\gate{X} & \qw & \meter{}\\
    \lstick{$\ket{0}$} &\gate{R_y(\theta_{11})} &\gate{R_z(\theta_{12})} &\gate{X} &\qw &\ctrl{-5} & \meter{}
\end{quantikz}
        };
            \node[right=of img1, scale=0.7] {
\begin{quantikz}
    \lstick{$\ket{0}$} & \gate{R_y(\theta_{1})} & \ctrl{1} \gategroup[6, steps=8, style={dashed, rounded corners,fill=blue!20, inner xsep=2pt}, background]{{\sc single layer}} &\qw & \qw & \qw & \qw & \qw  &\gate{Z}  & \gate{R_y(\theta_{2})} &\meter{} \\
        \lstick{$\ket{0}$} & \gate{R_y(\theta_{3})} & \gate{Z} & \ctrl{1} & \qw & \qw & \qw & \qw   &\qw &\gate{R_y(\theta_{4})} & \meter{} \\
    \lstick{$\ket{0}$} & \gate{R_y(\theta_{5})} & \qw & \gate{Z} & \ctrl{1} & \qw & \qw & \qw &\qw & \gate{R_y(\theta_{6})} & \meter{} \\
    \lstick{$\ket{0}$} & \gate{R_y(\theta_{7})} & \qw & \qw & \gate{Z} & \ctrl{1} & \qw & \qw  &\qw & \gate{R_y(\theta_{8})} & \meter{} \\
    \lstick{$\ket{0}$} & \gate{R_y(\theta_{9})} & \qw & \qw & \qw & \gate{Z} & \ctrl{1} & \qw  &\qw & \gate{R_y(\theta_{10})} & \meter{} \\
    \lstick{$\ket{0}$} & \gate{R_y(\theta_{11})} & \qw & \qw & \qw & \qw & \gate{Z} & \qw &\ctrl{-5} & \gate{R_y(\theta_{12})} & \meter{}
\end{quantikz}
        };
    \end{tikzpicture}
    \caption{Six-qubit examples of hardware-efficient parameterized quantum circuits (with nearest-neighbours interaction) used for our experiments. The filled blue square corresponds to a single layer. Similar quantum circuits were used with an all-to-all connectivity.}
    \label{fig:quantum_circuits}
    \end{center}
\end{figure}
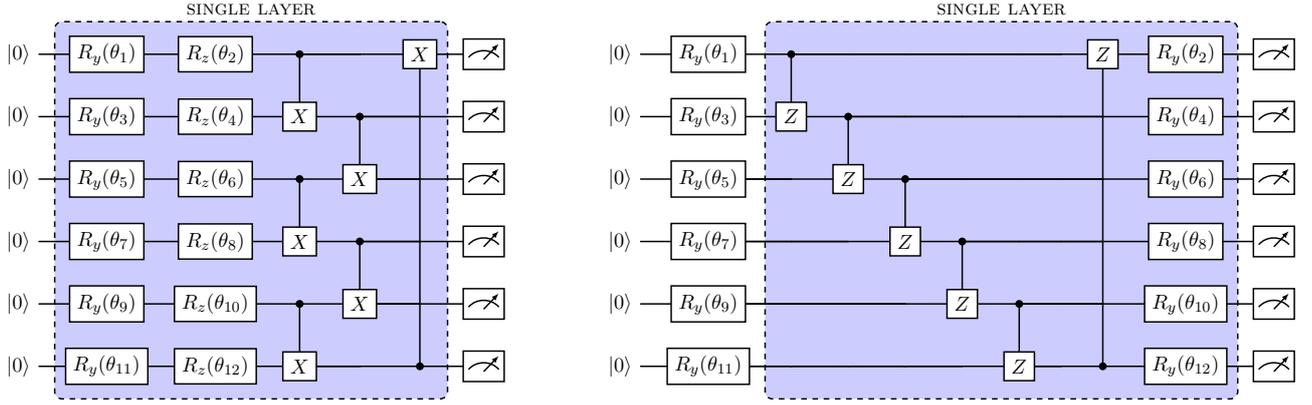

\section{Proof of Lemma \ref{lemma:probability}}
\label{appendix:proof_lemma}

In this section, we provide the proof of Lemma \ref{lemma:probability}. For clarity, we reintroduce our notation below. Let $S = [m]$, be the set of parameters that parameterize a quantum circuit. Let also $L \subseteq S$ with $|L| = l$ be the target subset of $l < m$ parameters that result in an independent change in the quantum state. Our goal is to quantify the probability of sampling a subset $S_k \subseteq S$ (of cardinality $|S_k| = l + k$) so that $L \subseteq S_k$.\\

Consider at first the case where $k = 0$. In that case, the probability that we sample the target subset can be calculated as:
\begin{equation}
    \Pr[ L = S_0] = \Bigg(\frac{l}{m}\Bigg) \Bigg(\frac{l - 1}{m-1}\Bigg) \ldots \Bigg(\frac{1}{m-l+1}\Bigg) = \frac{l!(m-l)!}{m!}
\end{equation}\\
where the right-hand side corresponds to the condition probability of sampling the first parameter at random and being in the target subset $\frac{l}{m}$, then sampling the second parameter at random and being also one of the remaining $(l-1)$ parameters in the target subset $\frac{l-1}{m}$ and so on until the last parameter that we sample to be also on the target subset $\frac{1}{m-l+1}$.
Then, consider $k = 1$. In that case, the probability of sampling the subset is calculated as:
\begin{equation}
\begin{gathered}
    \Pr[L \subseteq S_k] = \frac{m-l}{m}\frac{l}{m-1}\frac{l-1}{m-2} \ldots \frac{1}{m-l} + \frac{l}{m}\frac{m-l}{m-1}\frac{l-1}{m-2} \ldots \frac{1}{m-l} \\
    + \ldots + \frac{l}{m}\frac{l-1}{m-1}\ldots \frac{1}{m-l+1}\frac{m-l}{m} = (l+1) \Pr [S_0 = L]
\end{gathered}
\end{equation}
where the first term corresponds to the conditional probability that the first parameter that we sample is not in the target subset, but the rest are. Then, the second term corresponds to the first parameter being in the target subset, the second not being in the target subset, and the rest of the parameters again being in the target subset and so on for the remaining terms. In the exact same manner, we can calculate that for the general case $k = l+n$, the probability is:
\begin{equation}
    \Pr[ L \subseteq S_n] = \frac{(l + n)!}{n! l!}\frac{l!(m-l)!}{m!}.
\end{equation}

\section{Classical Fisher Information Matrix from Pauli Measurements}
\label{appendix:cfim_pauli_measurements}

In this section, we provide an illustrative example explaining why random Pauli measurements cannot be used in the update rule (Eq. \eqref{update_rule_rng}). In Figure \ref{fig:cfim_z_measurements} we can visualize how performing natural gradient with projective measurements on the $Z$-basis results in the inability of the classical optimization algorithm to converge. The results correspond to a Heisenberg model ($J=h=1$) of 9-qubits and a 3-layer parameterized quantum circuit seen on the left side of Figure \ref{fig:quantum_circuits}. In this case, the Natural Gradient, even with a small choice for the learning rate $\eta$, performs very large updates and is unable to converge. 

\begin{figure}
\begin{center}
\begin{tikzpicture}
\node (img)  {\includegraphics[scale=0.5]{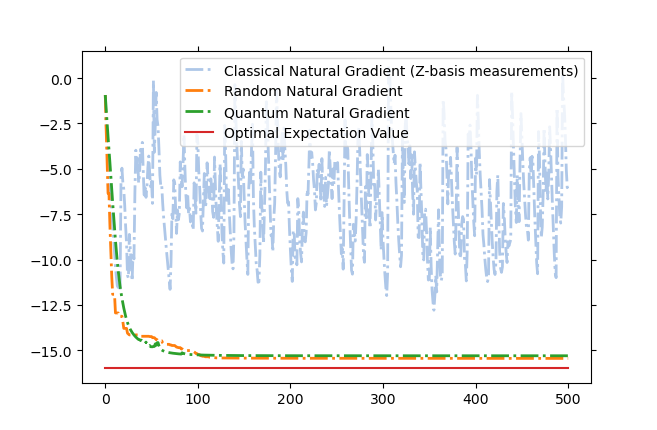}};
\node[below=of img, node distance=0cm, yshift=1.5cm] {\scriptsize Optimization Iterations};
\node[left=of img, node distance=0cm, xshift=0.7cm, rotate=90, anchor=center,yshift=-0.7cm] {\scriptsize $\mathcal{L}(\boldsymbol{\theta})$};
\end{tikzpicture}
\caption{Performance of RNG, QNG and Natural Gradient (with Pauli-$Z$ measurements) for a Heisenberg model of 9 qubits. While the RNG and QNG are able to converge fast (with the RNG) the natural gradient cannot converge as in most iterations it makes very large steps (even with a small learning rate).}
\label{fig:cfim_z_measurements}
\end{center}
\end{figure}

\section{Additional Experiments}
\label{sec:additional_experiments}

In this section, we provide some extra experiments, comparing our two proposed methods (see \emph{Random Natural Gradient} in Sec. \ref{sec:random_natural_gradient} and \emph{Stochastic-Coordinate Quantum Natural Gradient} in Sec. \ref{sec:stochastic_coordinate}) on the MaxCut problem (see Sec. \ref{sec:method_evaluation}). For our experiments, we employed the PQC visualized on the right side of Figure \ref{fig:quantum_circuits} with 3 layers. We sampled random 8 and 10-qubit unweighted 3-regular graph instances. 
\begin{figure*}[!ht]
\begin{tikzpicture}
\node (img1)  {\includegraphics[scale=0.54]{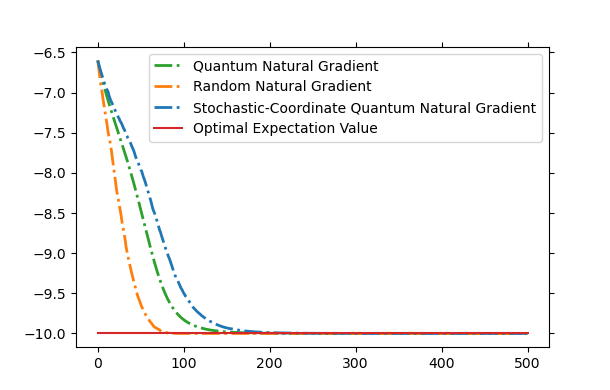}};
\node[below=of img1, node distance=0cm, yshift=1.3cm] (xlabel1) {\scriptsize Optimization Iterations};
\node[left=of img1, xshift=0.5cm, node distance=0cm, rotate=90, anchor=center,yshift=-0.7cm] {\scriptsize $\mathcal{L}(\boldsymbol{\theta})$};
\node[right=of img1, xshift=-1.6cm] (img2)  {\includegraphics[scale=0.54]{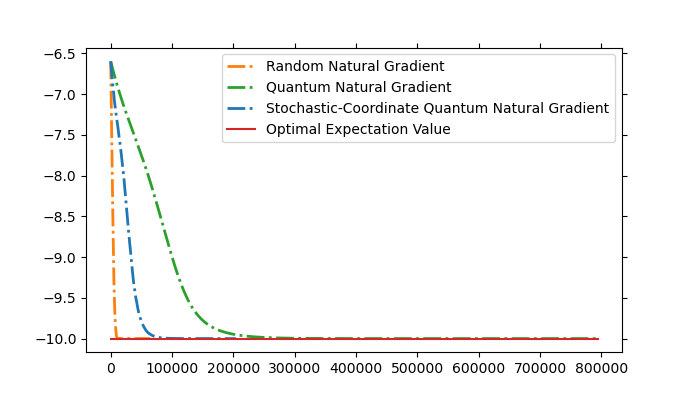}};
\node[below=of img2, node distance=0cm, yshift=1.3cm] (xlabel2) {\scriptsize Quantum State Preparations};
\node[left=of img2, xshift=0.5cm, node distance=0cm, rotate=90, anchor=center, yshift=-0.7cm] {\scriptsize $\mathcal{L}(\boldsymbol{\theta})$};
\node[below=of img1] (img3) {\includegraphics[scale=0.54]{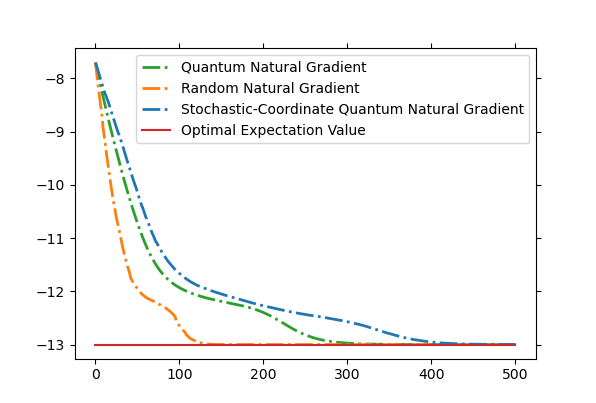}};
\node[below=of img3, node distance=0cm, yshift=1.3cm] (xlabel1) {\scriptsize Optimization Iterations};
\node[left=of img3, xshift=0.5cm, node distance=0cm, rotate=90, anchor=center,yshift=-0.7cm] {\scriptsize $\mathcal{L}(\boldsymbol{\theta})$};
\node[right=of img3, xshift=-1cm] (img4)  {\includegraphics[scale=0.54]{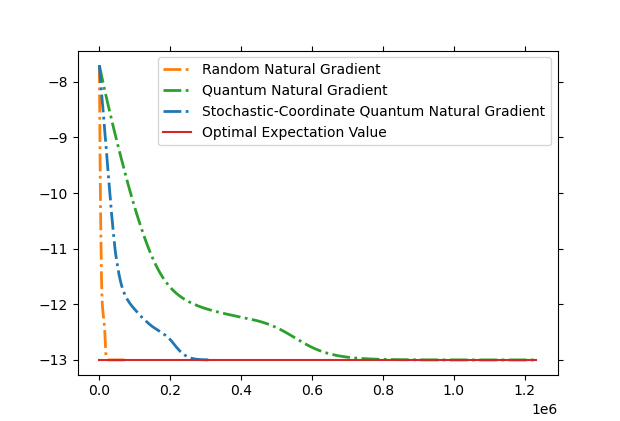}};
\node[below=of img4, node distance=0cm, yshift=1.3cm] (xlabel1) {\scriptsize Quantum State Preparations};
\node[left=of img4, xshift=0.5cm, node distance=0cm, rotate=90, anchor=center,yshift=-0.7cm] {\scriptsize $\mathcal{L}(\boldsymbol{\theta})$};
\end{tikzpicture}
\caption{Comparison of QNG, RNG and SC-QNG on MaxCut instances corresponding to 3-regular graphs of 8 qubits (up left and up right figures) and 10 qubits (bottom left and bottom right).}
\label{fig:optimization_methods_comparison}
\end{figure*}

\section{Depth overhead in Information Matrices Calculation}
\label{appendix:depth_overhead}

As it was analyzed in Section \ref{sec:preliminaries}, the quantum natural gradient (QNG) requires the calculation of the QFIM at each iteration. Two different ways to measure the QFIM elements are explained below and are based on the \emph{parameter-shift} approach \cite{PhysRevA.103.012405}.

As we discussed in the preliminaries section (Sec. \ref{sec:preliminaries}), the elements of the quantum Fisher information matrix can be expressed as:
\begin{equation}
    [\mathcal{F}_Q(\boldsymbol{\theta)}]_{i,j} = -\frac{\partial^2}{\partial \theta_i\partial \theta_j} \Big{|} \bra{\psi(\boldsymbol{\theta'})}\ket{\psi(\boldsymbol{\theta})}\Big{|}^2 \Bigg{|}_{\boldsymbol{\theta'} = \boldsymbol{\theta}}
\end{equation}
which means that by using the parameter-shift rules (with $r=\frac{1}{2}$), the QFIM can be written as:
\begin{equation}
\begin{aligned}
     [\mathcal{F}_Q(\boldsymbol{\theta)}]_{i,j} = -\frac{1}{4} \Big[ |\bra{\psi(\boldsymbol{\theta})}\ket{\psi(\boldsymbol{\theta} + (\hat{\boldsymbol{e}}_i + \hat{\boldsymbol{e}}_j)\pi/2)}|^2 - |\bra{\psi(\boldsymbol{\theta})}\ket{\psi(\boldsymbol{\theta} + (\hat{\boldsymbol{e}}_i - \hat{\boldsymbol{e}}_j)\pi/2)}|^2 \\   -|\bra{\psi(\boldsymbol{\theta})}\ket{\psi(\boldsymbol{\theta} + (-\hat{\boldsymbol{e}}_i + \hat{\boldsymbol{e}}_j)\pi/2)}|^2 +|\bra{\psi(\boldsymbol{\theta})}\ket{\psi(\boldsymbol{\theta} - (\hat{\boldsymbol{e}}_i + \hat{\boldsymbol{e}}_j)\pi/2)}|^2 \Big]
\end{aligned}
\end{equation}
where $\hat{\boldsymbol{e}}_i$ is the unit vector pointing in the $i$-th direction. As we can see, for every matrix element of the QFIM, we have to calculate four expectation values. Focusing on a single expectation value out of the four, we can write it as:
\begin{equation}
|\bra{\psi(\boldsymbol{\theta})}\ket{\psi(\boldsymbol{\theta} + (\hat{\boldsymbol{e}}_i + \hat{\boldsymbol{e}}_j)\pi/2)} |^2 = |\bra{0} U^\dagger(\boldsymbol{\theta}) U(\boldsymbol{\theta} + (\hat{\boldsymbol{e}}_i + \hat{\boldsymbol{e}}_j)\pi/2)\ket{0}|^2
\end{equation}
which can be intuitively understood as the probability of measuring all qubits of the $U(\boldsymbol{\theta})U(\boldsymbol{\theta} + (\hat{\boldsymbol{e}}_i + \hat{\boldsymbol{e}}_j)\pi/2)\ket{0}$ state at $\ket{0}$. As such, for a parameterized quantum circuit of depth $d$ that generates a parameterized quantum state $\ket{\psi(\boldsymbol{\theta})}$, measuring its QFIM will require $O(m^2)$ quantum circuits of depth $2d$. On the other hand, one could use a SWAP test, as seen in the right side of Figure \ref{fig:qfim_circuits}, which reduces the depth of the circuit but requires twice the number of qubits.

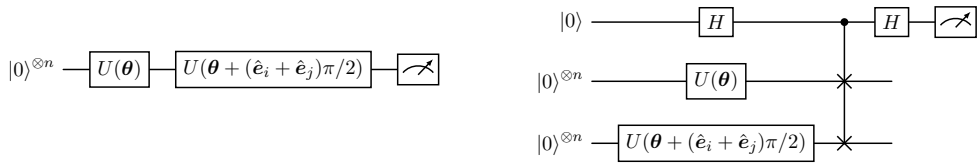
\begin{figure}[!ht]
    \begin{center}
    \begin{tikzpicture}
        \node (img1) [scale=0.7]{
            \begin{quantikz}
            \lstick{$\ket{0}^{\otimes n}$} &\gate[1]{U(\boldsymbol{\theta})} & \gate[1]{U(\boldsymbol{\theta} + (\hat{\boldsymbol{e}}_i + \hat{\boldsymbol{e}}_j)\pi/2)} &\meter{} \\
            \end{quantikz}
        };
            \node[right=of img1, scale=0.7] {
            \begin{quantikz}
            \lstick{$\ket{0}$} & \gate{H} &\ctrl{2} &\gate{H} &\meter{}\\
            \lstick{$\ket{0}^{\otimes n}$} & \gate{U(\boldsymbol{\theta})} &\targX & & \\
            \lstick{$\ket{0}^{\otimes n}$} & \gate[1]{U(\boldsymbol{\theta} + (\hat{\boldsymbol{e}}_i + \hat{\boldsymbol{e}}_j)\pi/2)} & \targX &&
            \end{quantikz}
        };
    \end{tikzpicture}
    \caption{Different ways to calculate overlaps that are required for the calculation of QFIM. On the left-hand side, we apply a circuit of depth $2d$ and measure the system of qubits. On the right side, we apply a SWAP test that requires twice the number of qubits but reduces the depth of the calculation.}
    \label{fig:qfim_circuits}
    \end{center}
\end{figure}

In our method, the resources needed to calculate the random CFIM are \emph{user-dependent}. The quantum state $\ket{\psi(\boldsymbol{\theta})}$ is prepared by a parameterized unitary $U(\boldsymbol{\theta})$ (of depth $d$) and the measurement is performed by rotating the state by a different unitary $V(\boldsymbol{\phi})$ which is usually chosen to have depth less than $d$ (see Figure \ref{fig:random_measurements}).

Specifically, the user first calculates the outcome probabilities $p_l(\boldsymbol{\theta})$ ($l \in \{0,1\}^n$) by employing the quantum circuit visualized on the left of Figure \ref{fig:random_measurements}. Then, they calculate the outcome probabilities $\frac{\partial p_l(\boldsymbol{\theta})}{\partial \theta_i}$ by using again the parameter-shift rule.
\begin{equation}
    \frac{\partial p_l(\boldsymbol{\theta}) }{\partial \theta_j} = \frac{1}{2}\Big( p_l(\boldsymbol{\theta} + \frac{\pi}{2}\hat{\boldsymbol{e}}_i) - p_l( \boldsymbol{\theta} - \frac{\pi}{2}\hat{\boldsymbol{e}}_i)\Big).
\end{equation}
One of the advantages is that the random-measurement unitary can be chosen to have depth less than that required to generate the quantum state $\ket{\psi(\boldsymbol{\theta})}$. For example, for a parameterized family of states such as those in Figure \ref{fig:quantum_circuits} consisting of $l$ layers, we always employed measurements that resulted in a depth-overhead less than that required for QNG. Specifically, we tested many different hardware-efficient ansatz families, such as those in Figure \ref{fig:quantum_circuits}, and our results remained consistent. That is, the hardware-efficient unitary consists of $k<l$ layers, showcasing that RNG remains more ``NISQy'' than QNG.

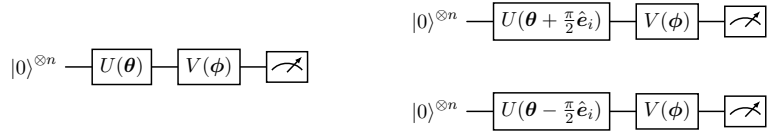
\begin{figure}
    \centering
    \begin{tikzpicture}
         \node (img1) [scale=0.7]{\begin{quantikz}
            \lstick{$\ket{0}^{\otimes n}$} &\gate[1]{U(\boldsymbol{\theta})} & \gate[1]{V(\boldsymbol{\phi})} &\meter{} 
        \end{quantikz}};
            \node[right=of img1, scale=0.7] {
            \begin{quantikz}
            \lstick{$\ket{0}^{\otimes n}$} &\gate[1]{U(\boldsymbol{\theta} + \frac{\pi}{2}\hat{\boldsymbol{e}}_i)} & \gate[1]{V(\boldsymbol{\phi})} &\meter{}\\
            \\
            \lstick{$\ket{0}^{\otimes n}$} &\gate[1]{U(\boldsymbol{\theta} - \frac{\pi}{2}\hat{\boldsymbol{e}}_i)} & \gate[1]{V(\boldsymbol{\phi})} &\meter{}
            \end{quantikz}
        };
    \end{tikzpicture}
    \caption{Quantum circuit used for the calculation of both $p_l(\boldsymbol{\theta})$ and $ \frac{\partial p_l(\boldsymbol{\theta}) }{\partial \theta_j}$ with $l \in \{0,1\}^n$ at different bases. First, the user prepares the parameterized quantum state $\ket{\psi(\boldsymbol{\theta})}$ and then, the quantum state is measured on a random basis. The random basis measurement is performed by rotating the parameterized state by a unitary $V(\boldsymbol{\phi})$ and then measured in the computational basis.}
    \label{fig:random_measurements}
\end{figure}

\end{document}